\title{A Finite Algorithm for the Realizabilty of a Delaunay Triangulation}%{An extended abstract of this paper is to appear in the 17th International Symposium on Parameterized and Exact Computation (IPEC), 2022.} %TODO Please add
\author{Akanksha Agrawal}{Indian Institute of Technology Madras, India}{akanksha@cse.iitm.ac.in}{https://orcid.org/0000-0002-0656-7572}{Supported by New Faculty Initiation Grant no. NFIG008972.}%TODO mandatory, please use full name; only 1 author per \author macro; first two parameters are mandatory, other parameters can be empty. Please provide at least the name of the affiliation and the country. The full address is optional. Use additional curly braces to indicate the correct name splitting when the last name consists of multiple name parts.
\author{Saket Saurabh}{Institute of Mathematical Sciences, HBNI, India\\University of Bergen, Norway}{saket@imsc.res.in}{https://orcid.org/0000-0001-7847-6402}{Supported by European Research Council (ERC) under the European Union's Horizon 2020 research and innovation programme (no. 819416), and Swarnajayanti Fellowship (no. DST/SJF/MSA01/2017-18).\begin{minipage}{0.1\textwidth}
    \begin{center}
        \includegraphics[scale=0.5]{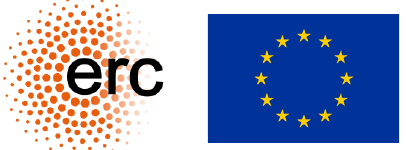}
    \end{center}
\end{minipage}}
\author{Meirav Zehavi}{Ben-Gurion University of the Negev, Israel}{meiravze@bgu.ac.il}{https://orcid.org/0000-0002-3636-5322}{Supported by Israel Science Foundation grant no. 1176/18, and United States – Israel Binational Science Foundation grant no. 2018302.\\
*{\em An extended abstract of this paper is to appear in the 17th International Symposium on Parameterized and Exact Computation (IPEC), 2022.}
}
\authorrunning{A. Agrawal and S. Saurabh and M. Zehavi} %TODO mandatory. First: Use abbreviated first/middle names. Second (only in severe cases): Use first author plus 'et al.'
\keywords{Delaunay Triangulation, Delaunay Realization, Finite Algorithm, Integer Coordinate Realization} %TODO mandatory; please add comma-separated list of keywords
\begin{document}
	\maketitle
%--------------Abstract---------------------------------------------
\begin{abstract}
The \emph{Delaunay graph} of a point set $P \subseteq \mathbb{R}^2$ is the plane graph with the vertex-set $P$ and the edge-set that contains $\{p,p'\}$ if there exists a disc whose intersection with $P$ is exactly $\{p,p'\}$. Accordingly, a triangulated graph $G$ is \emph{Delaunay realizable} if there exists a triangulation of the Delaunay graph of some $P \subseteq \mathbb{R}^2$, called a \emph{Delaunay triangulation} of $P$, that is isomorphic to $G$. The objective of \textsc{Delaunay Realization} is to compute a point set $P \subseteq \mathbb{R}^2$ that realizes a given graph $G$ (if such a $P$ exists). Known algorithms do not solve \textsc{Delaunay Realization} as they are non-constructive. Obtaining a constructive algorithm for \textsc{Delaunay Realization} was mentioned as an open problem by Hiroshima et al.~\cite{hiroshima2000}. We design an $n^{\mathcal{O}(n)}$-time constructive algorithm for \textsc{Delaunay Realization}. In fact, our algorithm outputs sets of points with {\em integer} coordinates. 
%The design of our algorithm involves a careful manipulation of a (hypothetical) point set in $\mathbb{R}^2$, which allows to argue that it is ``safe'' to add new polynomials to a preliminary set of polynomials devised to describe the problem. Having these new polynomials, we are able to ensure that certain approximate solutions, which we can find in finite time, are actually exact solutions.
%We believe that our contribution is a valuable step forward in the study of algorithms for geometric problems where one is interested in finding a solution rather than only determining whether one exists.%, which has so far been the focus in the study of many well-known geometric problems, such as the \textsc{Delaunay Realization} problem considered here.

	\end{abstract}

\section{Introduction}\label{sec:intro}

We study Delaunay graphs---through the lens of the well-known \textsc{Delaunay Realization} problem---which are defined as follows.
Given a point set $P \subseteq \mathbb{R}^2$, the \emph{Delaunay graph, $\mathscr{DG}(P)$, of $P$} is the graph with vertex-set $P$ and edge-set that consists of every pair $(p,p')$ of points in $P$ that satisfies the following condition: there exists a disc whose boundary intersects $P$ only at $p$ and $p'$, and whose interior does not contain any point in $P$. The point set $P \subseteq \mathbb{R}^2$ is in general position if it contains no four points from $P$ on the boundary of a disc. If $P$ is in general position, $\mathscr{DG}(P)$ is a triangulation, called a \emph{Delaunay triangulation}, denoted by $\mathscr{DT}(P)$.\footnote{We assume that $|P| \geq 4$, as otherwise, the problem that we consider, is solvable in polynomial time.} Otherwise, Delaunay triangulation and the notation $\mathscr{DT}(P)$, may refer to any triangulation obtained by adding edges to $\mathscr{DG}(P)$. Thus, Delaunay triangulation of a point set $P$ is unique if and only if $\mathscr{DG}(P)$ is a triangulation. An alternate characterization of \DT s is that in such a triangulation, for any three points of a triangle of an interior face, the unique disc whose boundary contains these three points does not contain any other point in $P$.

The \DG\ of a point set is a planar graph~\cite{comp-geom-berg}, and triangulations of such graphs form an important subclass of the class of triangulations of a point set, also known as the class of maximal planar sub-divisions of the plane. Accordingly,  efficient algorithms for computing a \DT\ for a given point set have been developed (see~\cite{comp-geom-berg,Clarkson:1989,Guibas1992}). One of the main reasons underlying the interest in \DTs\ is that any angle-optimal triangulation of a point set is actually a \DT\ of the point set. Here, optimality refers to the maximization of the smallest angle~\cite{comp-geom-berg,DBLP:journals/siamdm/BattistaV96}.
This property is particularly useful when it is desirable to avoid ``slim'' triangles---this is the case, for example, when approximating a geographic terrain. Another main reason underlying the interest in \DTs\ is that these triangulations are the duals of ``Voronoi diagrams'' (see~\cite{Okabe-comp-geom-book}). 

We are interested in a well-known problem which, in a sense, is the ``opposite'' of computing a \DT\ for a given point set. Here, rather than a point set, we are given a triangulated graph $G$. The graph $G$ is \emph{Delaunay realizable} if there exists $P \subseteq \mathbb{R}^2$ such that $\mathscr{DT}(P)$ is isomorphic to $G$. Specifically, a point set $P \subseteq \mathbb{R}^2$ is said to {\em realize} $G$ (as a \DT) if $\mathscr{DT}(P)$ is isomorphic to $G$.\footnote{As $G$ is triangulation, if $\mathscr{DT}(P)$ is isomorphic to $G$, then $\mathscr{DT}(P)$ is unique.} The problem of finding a point set that realizes $G$ is called \textsc{Delaunay Realization}. This problem is important not only theoretically, but also practically (see, e.g.,~\cite{OISHI,sugihara1992const,sugihara1994robust}). Formally, it is defined as follows.

\defproblemout{\delaunaytr}{A \triangulation\ $G$ on $n$ vertices.}{If $G$ is realizable as a \DT, then output $P \subseteq \mathbb{R}^2$ that realizes $G$ (as a \DT). Otherwise, output \no.} 

Dillencourt~\cite{Dillencourt:1987:TDT} established necessary conditions for a triangulation to be realizable as a \DT. On the other hand, Dillencourt and Smith~\cite{Dillencourt-DT} established sufficient conditions for a triangulation to be realizable as a \DT. Dillencourt~\cite{Dillencourt:1990} gave a constructive proof showing that any triangulation where all vertices lie on the outer face is realizable as a \DT. Their approach, which results in an algorithm that runs in time $\OO(n^2)$, uses a criterion concerning angles of triangles in a hypothetical \DT. In 1994, Sugihara~\cite{Sugihara:1994} gave a simpler proof that all outerplanar triangulations are realizable as \DTs. Later, in 1997, Lambert~\cite{lambert1997} gave a linear-time algorithm for realizing an outerplanar triangulation as a \DT. More recently, Alam et al.~\cite{AlamRS11} gave yet another constructive proof for outerplanar triangulations. 

Hodgson et al.~\cite{hodgson1992char} gave a polynomial-time algorithm for checking if a graph is realizable as a convex polyhedron with all vertices on a common sphere. Using this, Rivin~\cite{rivin1994euclidean} designed a polynomial-time algorithm for testing if a graph is realizable as a \DT. Independently, Hiroshima et al.~\cite{hiroshima2000} found a simpler polynomial-time algorithm, which relies on the proof of a combinatorial characterization of Delaunay realizable graphs. Both these results are non-constructive, i.e., they cannot output a point set $P$ that realizes the input as a Delaunay triangulation, but only answer \yes{} or \no. It is a long standing open problem to design a finite time algorithm for \textsc{Delaunay Realization}. 
 
Obtaining a constructive algorithm for \textsc{Delaunay Realization} was mentioned as an open problem by Hiroshima et al.~\cite{hiroshima2000}. We give the {\em first} exponential-time algorithm for the \delaunaytr{} problem. Our algorithm is based on the computation of two sets of polynomial constraints, defined by the input graph $G$. In both sets of constraints, the degrees of the polynomials are bounded by $2$ and the coefficients are integers. The first set of constraints forces the points on the outer face to form a convex hull,\footnote{The convex hull of a point set realizing $G$ forms the outer face of its \DT.} and the second set of constraints ensures that for each edge in $G$, there is a disc containing only the endpoints of the edge. Roughly speaking, we prove that a triangulation is realizable as a \DT\ if and only if a point set realizing it as a \DT\ satisfies every constraint in our two sets of constraints. We proceed by proving that if a triangulation is realizable as a \DT, then there is $P \subseteq \mathbb{Z}^2$ such that $\mathscr{DT}(P)$ is isomorphic to $G$. This result is crucial to the design of our algorithm, not only for the sake of obtaining an integer solution, but for the sake of obtaining any solution. In particular, it involves a careful manipulation of a (hypothetical) point set in $\mathbb{R}^2$, which allows to argue that it is ``safe'' to add new polynomials to our two sets of polynomials. Having these new polynomials, we are able to ensure that certain approximate solutions, which we can find in finite time, are actually exact solutions. We show that the special approximate solutions can be computed in polynomial time, and hence we actually solve the problem precisely. To find a solution satisfying our sets of polynomial constraints, our algorithm runs in time $n^{\OO(n)}$. All other steps of the algorithm can be executed in polynomial~time.% The design of a practical (faster) exact exponential-time algorithm for \delaunaytr{} is an interesting direction for further research. 

%It is worth noting that not all geometric realization problems admit solutions with points of rational (or integer) coordinates. For instance, {\sc Line-Segment Graph Realization} is known to be $\exists \mathbb{R}$-complete~\cite{DBLP:journals/jct/KratochvilM94}. A more contrasting example to the fact that \delaunaytr{} admit solutions with rational (and integer) coordinates, is the result for its higher dimension analogue--- {\sc Delaunay subdivisions Realization} is $\exists \mathbb{R}$-complete~\cite{DBLP:journals/dcg/AdiprasitoPT15}. 

We believe that our contribution is a valuable step forward in the study of algorithms for geometric problems where one is interested in finding a solution rather than only determining whether one exists. Such studies have been carried out for various geometric problems (or their restricted versions) like {\sc Unit-Disc Graph Realization}~\cite{DBLP:journals/jct/McDiarmidM13}, {\sc Line-Segment Graph Realization}~\cite{DBLP:journals/jct/KratochvilM94}, {\sc Planar Graph Realization} (which is the same as {\sc Coin Graph Realization})~\cite{DBLP:conf/stoc/FraysseixPP88}, {\sc Convex Polygon Intersection Graph Realization}~\cite{DBLP:journals/siamdm/MullerLL13}, and \delaunaytr. (The above list is not comprehensive; for more details we refer the readers to given citations and references therein.) We note that the higher dimension analogue of \delaunaytr, called {\sc Delaunay Subdivisions Realization}, is $\exists \mathbb{R}$-complete; for details on this generalization, see~\cite{DBLP:journals/dcg/AdiprasitoPT15}. %Some details (and results marked with $\spadesuit$) are omitted due to space constraints (see full version for complete details).

%Due to limited space, proofs of (marked with $\spadesuit$) can be found in the appendix.

%which can be found in the attached full version.% many details have been relegated to the appendix.

 %A contrasting example to the fact that \delaunaytr{} admit solutions with rational (and integer) coordinates, is the result for its higher dimension analogue--- {\sc Delaunay subdivisions Realization} is $\exists \mathbb{R}$-complete~\cite{DBLP:journals/dcg/AdiprasitoPT15}. 

%\todo{newly added, do we need citations?}%We remark that proofs of results marked with $\spadesuit$ were relegated to the~appendix.\todo{remove.}

\section{Preliminaries}
In this section, we present basic concepts related to Geometry, Graph Theory and Algorithm Design, and establish some of the notation used throughout. %We refer the reader to the books~\cite{comp-geom-berg,opac-b1093656} for geometry-related terms that are not explicitly defined here.

%Definitions related to Graph Theory, along with other basic notions, are given in Appendix~\ref{app:standard}. 
%\subsection{Standard Definitions and Notation}\label{app:standard}
%
We refer the reader to the books~\cite{comp-geom-berg,opac-b1093656} for geometry-related terms that are not explicitly defined here. 
We denote the set of natural numbers by $\mathbb{N}$, the set of rational numbers by $\mathbb{Q}$ and the set of real numbers by $\mathbb{R}$. By $\mathbb{R}^+$ we denote the set $\{x \in \mathbb{R} \mid x >0\}$. For $n \in \mathbb{N}$, we use $[n]$ as a shorthand for $\{1,2,\cdots, n\}$. A point is an element in $\mathbb{R}^2$. We work on Euclidean plane and the Cartesian coordinate system with the underlying bijective mapping of points in the Euclidean plane to vectors in the Cartesian coordinate system. For $p, q \in \mathbb{R}^2$, by $\textsf{dist}(p,q)$ we denote the distance between $p$ and $q$ in $\mathbb{R}^2$.

\subparagraph{Graphs.} We use standard terminology from the book of Diestel~\cite{diestel-book} for graph-related terms not explicitly defined here. For a graph $G$, $V(G)$ and $E(G)$ denote the vertex and edge sets of $G$, respectively. For a vertex $v \in V(G)$, $d_G(v)$ denotes the {\em degree} of $v$, i.e the number of edges incident on $v$, in the graph $G$. For an edge $(u,v) \in E(G)$, $u$ and $v$ are called the {\em endpoints} of the edge $(u,v)$. For $S \subseteq V(G)$, $G[S]$, and $G - S$ are the subgraphs of $G$ induced on $S$ and $V(G) \setminus S$, respectively. For $S \subseteq V(G)$, we let $N_G(S)$ and $N_G[S]$ denote the open and closed neighbourhoods of $S$ in $G$, respectively. That is, $N_G(S) = \{v \mid (u,v) \in E(G), u \in S\} \setminus S$ and $N_G[S] = N_G(S) \cup S$. We drop the sub-script $G$ from $d_G(v),N_{G}(S)$, and $N_{G}[S]$ whenever the context is clear. A {\em path} in a graph is a sequence of distinct vertices $v_0, v_1, \ldots, v_\ell$ such that $(v_i,v_{i+1})$ is an edge for all $0 \leq i < \ell$. Furthermore, such a path is called a $v_0$ to $v_\ell$ path. A graph is {\em connected} if for all distinct $u,v \in V(G)$, there is a $u$ to $v$ path in $G$. A graph which is not connected is said to be {\em disconnected}. A graph $G$ is called {\em $k$-connected} if for all $X \subseteq V(G)$ such that $|X| < k$, $G -X$ is connected. A {\em cycle} in a graph is a sequence of distinct vertices $v_0, v_1, \ldots, v_\ell$ such that $(v_i,v_{ (i+1) \mod (\ell+1)})$ is an edge for all $0 \leq i \leq \ell$. A cycle $C$ in $G$ is said to be a {\em non-separating cycle} in $G$ if $G-V(C)$ is connected.

\subparagraph{Planar Graphs and Plane Graphs.} A graph $G$ is called \emph{planar} if it can be drawn on the plane such that no two edges cross each other except possibly at their endpoints. Formally, an embedding of a graph $G$ is an injective function $\varphi: V(G) \rightarrow \mathbb{R}^2$ together with a set ${\cal C}$ containing a continuous curve $C_{(u,v)}$ in the plane corresponding to each $(u,v) \in E(G)$ such that $\varphi(u)$ and $\varphi(v)$ are the endpoints of $C_{(u,v)}$. An embedding of a graph $G$ is {\em planar } if distinct $C,C' \in {\cal C}$ intersect only at the endpoints---that is, any point in the intersection of $C,C'$ is an endpoint of both $C,C'$. A graph that admits a planar embedding is a {\em planar graph}. Hereafter, whenever we say an embedding of a graph, we mean a planar embedding of it, unless stated otherwise. We often refer to a graph with a fixed embedding on the plane as a {\em plane graph}. For a plane graph $G$, the regions in $\mathbb{R}^2 \setminus G$ are called the \emph{faces} of $G$. We denote the set of faces in $G$ by $F(G)$. Note that since $G$ is bounded and can be assumed to be drawn inside a sufficiently large disc, there is exactly one face in $F(G)$ that is unbounded, which is called the {\em outer face} of $G$. A face of $G$ that is not the outer face is called an {\em inner face} of $G$. An embedding of a planar graph with the property that the boundary of every face (including the outer face) is a convex polygon is called a {\em convex drawing}.
Below we state propositions related to planar and plane graphs that will be useful later.

\begin{proposition}[Proposition 4.2.5~\cite{diestel-book}] \label{face-triangle-plane-graph}
For a $2$-connected plane graph $G$, every face of $G$ is bounded by a cycle.
\end{proposition}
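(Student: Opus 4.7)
The plan is to prove the statement by induction on the number of edges of $G$, using the ear decomposition theorem of Whitney: every 2-connected graph can be constructed from a cycle $C_0$ by successively adding ears, where an ear is a path whose two endpoints lie in the current graph but whose internal vertices (if any) do not. For a plane graph, a compatible ear decomposition can be selected so that each new ear is drawn entirely inside a single face of the graph constructed so far.

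For the base case, $G$ is a single cycle, so its two faces (inner and outer) are both bounded by $G$, which is a cycle. For the inductive step, let $G_i$ be the 2-connected plane graph constructed after $i$ ears and assume every face of $G_i$ is bounded by a cycle. Let $P$ be the next ear, with distinct endpoints $u,v \in V(G_i)$ (distinctness is guaranteed by 2-connectedness, since an ear connects two points of the previous block). Because $G_i \cup P$ is a plane graph, $P$ lies inside some face $f$ of $G_i$, which by the induction hypothesis is bounded by a cycle $C$. The vertices $u,v$ split $C$ into two $u$-$v$ paths $Q_1, Q_2$, and drawing $P$ inside $f$ subdivides $f$ into two new faces bounded by the walks $P \cup Q_1$ and $P \cup Q_2$. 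Each of these walks is a cycle. All other faces of $G_i$ are unaffected, so every face of $G_{i+1}$ is bounded by a cycle, completing the induction.

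The step I expect to require the most care is justifying that Whitney's ear decomposition can be carried out in a way compatible with the prescribed planar embedding of $G$, namely that at every stage the next ear can be chosen so that its endpoints both lie on the boundary of a common face of $G_i$ and its interior lies inside that face. This is not immediate from the abstract form of the theorem, but can be arranged by picking, at each stage, any $u$--$v$ path in $G \setminus E(G_i)$ whose interior vertices lie in a single face of $G_i$; such a path always exists as long as $G_i \neq G$, since $G$ is itself an embedded 2-connected graph.

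If one wishes to bypass the ear-decomposition machinery, an alternative route is a direct topological argument. Assume for contradiction that the closed boundary walk $W$ of some face $f$ visits a vertex $v$ more than once. Then in a small disc around $v$ the face $f$ appears as two (or more) disjoint angular wedges, separated in the cyclic rotation at $v$ by edges leading into other parts of $G$. A Jordan curve argument inside $f$ then shows that the subgraphs accessed through these separated sectors cannot be joined in $G-v$, so $v$ is a cut vertex, contradicting 2-connectedness.
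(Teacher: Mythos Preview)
The paper does not give its own proof of this proposition; it is quoted verbatim as Proposition~4.2.5 of Diestel's textbook and used as a black box. Your ear-decomposition argument is correct and is in fact precisely the proof Diestel gives: he invokes the structure theorem that every $2$-connected graph is obtained from a cycle by successively attaching $H$-paths, and then runs the same induction you describe, with each new path splitting one face-bounding cycle into two. Your caveat about arranging the ear decomposition compatibly with the given embedding is the right thing to flag; Diestel handles it implicitly by working inside the fixed plane graph throughout. The alternative cut-vertex argument you sketch is also sound and is a standard route, though not the one in the cited reference.
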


For a graph $G$ and a face $f \in F(G)$, we let $V(f)$ denote the set of vertices in the cycle by which $f$ is bounded. We often refer to $V(f)$ as the \emph{face boundary} of $f$.

\begin{proposition}[Proposition 4.2.10~\cite{diestel-book}]\label{face-boundaries-conn-graph}
For a $3$-connected planar graph, its face boundaries are precisely its non-separating induced cycles.
\end{proposition}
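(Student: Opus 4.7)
The plan is to prove the equivalence in two directions, using Proposition~\ref{face-triangle-plane-graph} and the Jordan curve theorem as the basic tools, and invoking $3$-connectivity at the two decisive points.

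First I would show that every face boundary $C$ is a non-separating induced cycle. That $C$ is a cycle follows from Proposition~\ref{face-triangle-plane-graph}, since a $3$-connected graph is in particular $2$-connected. To see that $C$ is induced, I would suppose for contradiction that $C$ has a chord $uv \in E(G)$. Since the face $f$ bounded by $C$ contains no vertex or edge of $G$ in its interior, this chord must lie in the complementary planar region; then planarity forces the two open arcs of $C \setminus \{u,v\}$ to sit on opposite sides of the simple closed curve formed by $uv$ together with one of the arcs, so $\{u,v\}$ becomes a $2$-separator, contradicting $3$-connectivity. For the non-separating property, I would note that because $C$ bounds a face, every vertex of $G - V(C)$ lives in the same complementary region of $C$; if $G - V(C)$ split into components $A$ and $B$, I would pick $a \in A$, $b \in B$ and apply Menger's theorem to obtain three internally vertex-disjoint $a$--$b$ paths, each of which must meet $V(C)$, and then combine this with the planar cyclic order around $C$ to extract a $2$-cut inside $V(C)$, again contradicting $3$-connectivity.

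For the converse, let $C$ be a non-separating induced cycle. By the Jordan curve theorem, the simple closed curve representing $C$ in the embedding partitions the rest of the plane into two open regions $R_1$ and $R_2$. Since no edge of $G$ crosses $C$, each vertex of $G - V(C)$ lies in exactly one of these regions, and since $G - V(C)$ is connected, all such vertices lie in a single region, say $R_1$. Because $C$ is induced, the only edges of $G$ with both endpoints in $V(C)$ are those of $C$ itself, so $R_2$ contains no vertex and no edge of $G$; hence $R_2$ is a face of $G$ whose boundary is exactly $C$.

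The main obstacle is the non-separating part of the forward direction: in a merely $2$-connected plane graph a face boundary can indeed separate the graph, so $3$-connectivity must be used in an essential way. I expect the cleanest route to be through Menger's theorem combined with a local analysis of how three internally disjoint paths meet $C$ in the planar embedding, which should pin down a $2$-vertex cut and close out the argument. Everything else reduces to careful bookkeeping with the Jordan curve theorem and the definition of an induced cycle.
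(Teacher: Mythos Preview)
The paper does not supply a proof of this proposition at all; it is quoted verbatim from Diestel's book and used as a black box. So there is no ``paper's own proof'' to compare against.

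That said, your outline is the classical argument, and most of it is in good shape. The converse direction is clean and correct: once $G-V(C)$ is connected and $C$ is induced, one Jordan region is empty of vertices and edges and hence is a face with boundary $C$. The ``induced'' half of the forward direction is also fine: a chord $uv$ of a face-bounding cycle forces $\{u,v\}$ to be a $2$-separator exactly as you describe.

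The only place where your write-up is still a plan rather than a proof is the non-separating half of the forward direction, which you yourself flag. The Menger idea is workable, but the phrase ``extract a $2$-cut inside $V(C)$'' hides all of the content, and it is not merely bookkeeping. Concretely: if $D$ and $D'$ are two components of $G-V(C)$, take three internally disjoint $v$--$C$ fans from some $v\in D$ (these are entirely inside $D$ because $D$ is a full component of $G-V(C)$); they split the non-face region into three sectors, and $D'$ lies in a single sector bounded by an arc $[x_1,x_2]$ of $C$, so $N(D')\subseteq [x_1,x_2]$. The subtle point is that $\{x_1,x_2\}$ need not separate $D'$ from $v$ directly---other $D$-vertices or further components could sit in that same sector---so one has to argue via $N(D')$ (e.g.\ showing $\{x_1,x_2\}$ separates $D'$ from $x_3$ in $G$, which does require controlling all of $G-C$, not just the tripod). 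Diestel's own proof handles this by looking at the face of $C\cup D$ that contains $D'$ rather than just a tripod, which makes the separation step cleaner. Either route closes the argument, but you should be aware that this step needs a genuine planar argument and not just an appeal to Menger.
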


Note that from Proposition~\ref{face-boundaries-conn-graph}, for a $3$-connected planar graph and its planar embeddings $G_{\cal P}$ and $G_{{\cal P}'}$, it follows that $F(G_{\cal P})= F(G_{{\cal P}'})$. (In the above we slightly abused the notation, and think of the sets $F(G_{\cal P})$ and $F(G_{{\cal P}'})$ in terms of their bounding cycles, rather than the regions of the plane.) Hence, it is valid to talk about $F(G)$ for a $3$-connected planar graph $G$, even without knowing its embedding on the plane.

\begin{proposition}[Tutte's Theorem~\cite{Tutte-draw-graphs}, also see~\cite{DBLP:journals/acta/ChibaON85,DBLP:journals/ieicet/NishizekiMR04}]\label{convex-drawing-plane-graph}
A $3$-connected planar graph admits a convex embedding on the plane with any face as the outer face. Moreover, such an embedding can be found in polynomial time.
\end{proposition}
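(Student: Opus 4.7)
The plan is to use Tutte's classical barycentric (``spring'') embedding. I would fix an arbitrary face $f^*$ of $G$ to serve as the outer face; by Proposition~\ref{face-boundaries-conn-graph} its boundary is a well-defined cycle $C$ of $G$. First I would place $V(C)$ at the vertices of a strictly convex polygon in the plane (for instance, a regular $|V(C)|$-gon), respecting the cyclic order along $C$. Then, for each interior vertex $v \in V(G) \setminus V(C)$, I would impose
\[
  \varphi(v) \;=\; \frac{1}{d_G(v)} \sum_{u \in N_G(v)} \varphi(u),
\]
i.e.\ ask $v$ to sit at the barycenter of its neighbors. Decoupling $x$- and $y$-coordinates yields two independent linear systems in the unknowns $\{\varphi(v)\}_{v \notin V(C)}$.

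The first technical step would be existence and uniqueness of the solution. The coefficient matrix is a principal submatrix of the graph Laplacian and is symmetric positive definite because $G$ is connected and the boundary values are pinned; equivalently, the energy $\sum_{(u,v)\in E(G)} \|\varphi(u)-\varphi(v)\|^2$ is strictly convex once the boundary is fixed, and so admits a unique minimizer. Hence $\varphi$ exists, is unique, and can be computed in polynomial time by Gaussian elimination, which already delivers the ``polynomial time'' part of the statement.

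The main technical step would be proving that the straight-line drawing associated with $\varphi$ is planar and that every face is a convex polygon, with $f^*$ as the outer face. The key local lemma is a ``two-sides'' property: for any interior vertex $v$ and any line $\ell$ through $\varphi(v)$, if not all neighbors of $v$ lie on $\ell$ then both open half-planes bounded by $\ell$ contain at least one neighbor of $v$. This is immediate from $\sum_{u \in N_G(v)} (\varphi(u)-\varphi(v)) = 0$. From this I would deduce (i) a maximum-principle statement: no interior vertex escapes the convex hull of $\varphi(V(C))$; and (ii) that the cyclic order of edges at each interior $v$ in the drawing matches the combinatorial rotation determined by the embedding of Proposition~\ref{face-boundaries-conn-graph}.

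The hardest step will be promoting these local statements to a global guarantee that the face boundaries are non-crossing convex polygons whose interiors tile the outer polygon. The standard route is a signed-area / winding-number computation: summing the winding contributions of the combinatorial faces around an arbitrary interior target point, the two-sides property forces each face to be traversed exactly once with positive orientation, which in turn implies convexity of each face and disjointness of the face interiors. The main obstacle is ruling out degeneracies in which three neighbors of some $v$ become collinear with $\varphi(v)$; these I would exclude using $3$-connectedness together with Proposition~\ref{face-triangle-plane-graph} to exhibit two internally disjoint paths from the neighborhood of $v$ to $V(C)$ on opposite sides of any candidate line, contradicting the alleged collinearity. With these pieces in place the proposition follows, and all verifications beyond solving the linear system are only geometric checks on an already-computed straight-line drawing.
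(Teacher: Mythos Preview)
The paper does not prove this proposition at all: it is stated as a cited result (Tutte's theorem, with the references to Tutte and to Chiba--Ono--Nishizeki for the algorithmic version), and no argument is given beyond the citation. Your proposal, by contrast, sketches the classical barycentric (spring) embedding proof that underlies those references, so in that sense you are reproducing the content behind the citation rather than diverging from the paper.

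Your outline is the standard one and is essentially sound: pin the chosen face on a strictly convex polygon, solve the Laplacian system for the interior vertices, and then argue planarity and convexity from the equilibrium condition. The polynomial-time claim is immediate from Gaussian elimination, as you note. One caution on the ``hardest step'': the degeneracy you should be excluding is not merely three neighbors of $v$ being collinear through $\varphi(v)$, but the possibility that distinct vertices map to the same point or that an entire face collapses to a segment. The usual route is to show, using $3$-connectedness, that for every edge $e$ the two faces incident to $e$ lie strictly on opposite sides of the line through $e$; the two-sides lemma plus a connectivity argument (three vertex-disjoint paths to the outer boundary) gives this. Once that ``strictly convex at every vertex'' property is in hand, the winding-number/area count you describe goes through. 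Your Proposition~\ref{face-triangle-plane-graph} invocation is a bit off---that proposition only says faces are bounded by cycles, which you already used implicitly; the real work is Menger-type connectivity from $3$-connectedness.
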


For a plane graph $G$ and a face $f \in F(G)$, by \emph{stellating} $f$ we mean addition of a new vertex $v^*_f$ inside $f$ and making it adjacent to all $v \in V(f)$.  We note that stellating a face of a planar graph results in another planar graph~\cite{Dillencourt-DT}.

\subparagraph{Triangulations and Delaunay Triangulations.} A \emph{triangulation} is a plane graph where each inner face is bounded by a cycle on three vertices. A graph which is isomorphic to a triangulation is called a \emph{\triangulatedgraph}. We state the following simple but useful property of triangulations that will be exploited later.

\begin{proposition}\label{triangulation-properties}
Let $G$ be a triangulation with $f^*$ being the outer face. Then, all the degree-$2$ vertices in $G$ must belong to $V(f^*)$.
\end{proposition}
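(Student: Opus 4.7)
The plan is to prove this by contradiction. Suppose there exists a degree-$2$ vertex $v \in V(G)$ with $v \notin V(f^*)$, and let $u, w$ be its two neighbours. The strategy is to locally inspect the two faces of $G$ incident to $v$ and show that their both being triangular inner faces forces a degenerate structure incompatible with $v$ lying off the outer boundary.

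First I would analyse the faces incident to $v$. The two edges $(u,v)$ and $(v,w)$ subdivide a small open disc around $v$ into exactly two sectors, so $v$ lies on the boundary of exactly two face incidences, and each corresponding face-boundary walk traverses $v$ using both edges $(u,v)$ and $(v,w)$ consecutively. Since $v \notin V(f^*)$, neither of these face incidences corresponds to the outer face, so both are inner faces and therefore, by the definition of a triangulation, each is bounded by a $3$-cycle.

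Next I would close the combinatorial argument. A $3$-cycle that uses both edges $(u,v)$ and $(v,w)$ must be the cycle $uvwu$, so in both inner faces incident to $v$ the third bounding edge must be $(u,w)$. Hence $(u,w) \in E(G)$ and the two inner faces at $v$ are both bounded by the triangle on $\{u,v,w\}$. In any planar embedding, however, a simple $3$-cycle bounds exactly two faces of the subgraph it spans, one on each side; if both sides were inner faces of $G$, then the rest of $G$ could not be drawn anywhere, forcing $G$ to consist precisely of the triangle $uvw$. But in that case one of the two faces bounded by this triangle is unbounded and hence equals $f^*$, so $v \in V(f^*)$, contradicting the assumption.

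The main technical subtlety I expect is justifying carefully that a degree-$2$ vertex has exactly two face incidences and that each of the two face-boundary walks passes through $v$ via both its incident edges. For $2$-connected triangulations --- which is the relevant setting in view of Proposition~\ref{face-triangle-plane-graph}, as faces are then bounded by cycles --- this is immediate; in more degenerate triangulations one has to argue using face-boundary walks rather than simple cycles, but the combinatorial heart of the argument, namely that two triangular faces sharing the same two bounding edges at a degree-$2$ vertex force $G$ to be just the triangle $uvw$, is unchanged.
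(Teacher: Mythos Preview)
The paper states Proposition~\ref{triangulation-properties} without proof; it is presented as a basic fact about triangulations and is not accompanied by any argument. So there is no ``paper's own proof'' to compare against here.

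Your argument is correct and is the natural one. The key points --- that a degree-$2$ vertex has exactly two face incidences, that each of the corresponding boundary walks must use both incident edges $(u,v)$ and $(v,w)$, that therefore both inner faces are bounded by the same $3$-cycle $uvw$, and that a $3$-cycle in a plane graph bounds exactly two regions so $G$ would have to equal this triangle --- are all sound. The only minor remark is that you could streamline the ``technical subtlety'' paragraph: since inner faces of a triangulation are by definition bounded by $3$-cycles (hence simple cycles visiting each vertex once), the two face incidences at $v$ cannot coincide, and the face-boundary-walk caveat is not really needed in this setting.
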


%Let $P \subseteq \mathbb{R}^2$ be a set of $n$ points. Then, the {\em Delaunay graph} $\fontDT{DG}(P)$ of $P$ is the graph with the vertex set  $P$ and the edge-set that consists of every pair $(p,p')$ of points in $P$ that satisfies the following condition: there exists a disc in $\mathbb{R}^2$ whose boundary intersects $P$ only at $p$ and $p'$, and whose interior does not contain any point in $P$~\cite{comp-geom-berg}.

\begin{proposition}[Theorem 9.6~\cite{comp-geom-berg}]\label{face-characterization-DT}
For a point set $P \subseteq \mathbb{R}^2$ on $n$ points, three points $p_1 , p_2 , p_3 \in P$ are vertices of the same face of the Delaunay graph of $P$ if and only if the circle through $p_1, p_2, p_3$ contains no point of $P$ in its interior.
\end{proposition}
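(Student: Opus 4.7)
The plan is to leverage the classical duality between the Delaunay graph $\mathscr{DG}(P)$ and the Voronoi diagram $\mathrm{Vor}(P)$: every bounded face of $\mathscr{DG}(P)$ corresponds bijectively to a vertex of $\mathrm{Vor}(P)$, and the sites whose Voronoi regions meet at such a vertex are exactly the vertices bounding the face. With this dictionary in hand, both sides of the claimed biconditional translate into equivalent statements about empty discs centred at Voronoi vertices, which is why I would route the proof through $\mathrm{Vor}(P)$ rather than manipulating the edge-by-edge definition of $\mathscr{DG}(P)$ directly.

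For the forward direction, I would take $p_1,p_2,p_3$ to be vertices of a face $f$ of $\mathscr{DG}(P)$ and consider the Voronoi vertex $c_f$ dual to $f$. Every vertex of $f$ lies at one common distance $r$ from $c_f$, while every other point of $P$ is at distance strictly greater than $r$ from $c_f$; otherwise $c_f$ would fail to lie in the closure of the Voronoi cell of some vertex of $f$. Hence the open disc of radius $r$ centred at $c_f$ is $P$-empty and has $p_1,p_2,p_3$ on its boundary, and since three non-collinear points determine a unique circle this boundary is exactly the circle through $p_1,p_2,p_3$.

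For the backward direction, let $C$ be the circle through $p_1,p_2,p_3$, with centre $c$ and radius $r$, and assume no point of $P$ lies in the open interior of $C$. Then $\textsf{dist}(c,p_i)=r$ for $i\in\{1,2,3\}$ and $\textsf{dist}(c,p)\geq r$ for every other $p\in P$, so $c$ lies simultaneously in the closures of the Voronoi regions of $p_1,p_2,p_3$. Because a Voronoi edge is incident to only two sites, $c$ cannot lie on just one edge and must therefore be a Voronoi vertex; the face of $\mathscr{DG}(P)$ dual to this vertex contains every site whose Voronoi region meets at $c$, and in particular contains $p_1,p_2,p_3$ among its vertices.

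The one subtlety I would flag, but which does not actually obstruct the argument, is non-general-position behaviour: if additional points of $P$ happen to lie on $C$, the face produced in the backward direction is a polygon with more than three sides, and, symmetrically, an inner face of $\mathscr{DG}(P)$ need not be a triangle. The statement only requires $p_1,p_2,p_3$ to be \emph{among} the vertices of a common face, so the Delaunay--Voronoi dictionary handles multi-site Voronoi vertices and multi-vertex Delaunay faces uniformly, and no further casework is needed.
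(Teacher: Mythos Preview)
The paper does not actually prove this proposition: it is stated in the Preliminaries and attributed directly to Theorem~9.6 of~\cite{comp-geom-berg}, with no argument given. Your proposal via the Delaunay--Voronoi duality is correct and is essentially the approach taken in the cited textbook, so there is nothing to compare against here beyond noting that you have reconstructed the standard proof.
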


A {\em \DT}\ is any triangulation that is obtained by adding edges to the Delaunay graph. A \DT\ of a point set $P$ is unique if and only if $\fontDT{DG}(P)$ is a triangulation, which is the case if $P$ is in general position~\cite{comp-geom-berg}. We refer to the \DT\ of a point set $P$ by $\fontDT{DT}(P)$ (assuming it is unique, which is the case in our paper). A \emph{triangulated graph} is \emph{\DR} if there exists a point set $P \subseteq \mathbb{R}^2$ such that $\fontDT{DT}(P)$ is isomorphic to $G$. If $G$ has at most three points, then testing if it is \DR\ is solvable in constant time. Also, we can compute an integer representation for it in constant time, if it exists. (Recall that while defining the general position assumption, we assumed that the point set has at least four points. This assumption does not cause any issues because we look for a realization of a graph which has at least four vertices.) 

\subparagraph{Polynomial Constraints.} Let us now give some definitions and notation related to polynomials and sets of polynomial constraints (equalities and inequalities). We refer the reader to the books~\cite{Artin:1998,book-real-algebraic-geometry-basu} for algebra-related terms that are not explicitly defined here. For $t,n \in \mathbb{N}$ and a set $C$, a polynomial $\mathscr{P}=\Sigma_{i \in [t]} a_i \cdot (\Pi_{j \in [n]} X_j^{d^i_j})$ on $n$ variables and $t$ terms is said to be a polynomial over $C$ if for all $i \in [t]$, $j \in [n]$ we have $a_i \in C$ and $d^i_j \in \mathbb{N}$. Furthermore, the {\em degree} of the polynomial $\mathscr{P}$ is defined to be $\max_{i \in [t]} (\Sigma_{j \in [n]} {d^i_j})$. We denote the set of polynomials on $n$ variables $X_1,X_2,\cdots X_n$ with coefficients in $C$ by $C[X_1,X_2,\cdots X_n]$.

A {\em polynomial constraint} $\mathscr{C}$ on $n$ variables with coefficients from $C\subseteq\mathbb{R}$ is a sequence $\mathscr{P} \Delta 0$, where $\mathscr{P} \in C[X_1,X_2,\cdots, X_n]$  and $\Delta \in \{=, \geq, >, \leq , <\}$. The {\em degree} of such a constraint is the degree of $\mathscr{P}$, and it is said to be an equality constraint if $\Delta$ is `='. We say that the constraint is {\em satisfied} by an element $(\bar x_1, \bar x_2, \cdots \bar x_n)\in\mathbb{R}^n$ if  $\mathscr{P}(\bar x_1, \bar x_2, \cdots \bar x_n) \Delta 0$.\footnote{Here, $\mathscr{P}(\bar x_1, \bar x_2, \cdots \bar x_n)$ is the evaluation of $\mathscr{P}$, where the variable $X_i$ is assigned the value $\bar x_i$, for $i \in [n]$.}
Given a set $\mathfrak{C}$ of polynomial constraints on $n$ variables, $X_1,X_2,\cdots, X_n$, and with coefficients from $C\subseteq\mathbb{R}$, we say that an element $(\bar x_1, \bar x_2, \cdots \bar x_n) \in \mathbb{R}^n$ {\em satisfies} $\mathfrak{C}$ if for all $\mathscr{C} \in \mathfrak{C}$, we have that $(\bar x_1, \bar x_2, \cdots \bar x_n)$ satisfies $\mathscr{C}$. In this case, $(\bar x_1, \bar x_2, \cdots \bar x_n)$ is also called a {\em solution} of  $\mathfrak{C}$.
Furthermore, $\mathfrak{C}$ is said to be {\em satisfiable} (in $\mathbb{R}$) if there exists $(\bar x_1, \bar x_2, \cdots \bar x_n) \in \mathbb{R}^n$ satisfying~$\mathfrak{C}$. 

%Given a set $\mathfrak{P}$ of $m$ polynomials over $\mathbb{R}$ on $n$ variables with coefficients in $\mathbb{Z}$, a solution (or a zero) to $\mathfrak{P}$ refers to $(\bar x_1, \bar x_2, \cdots \bar x_n) \in \mathbb{R}^n$ such that for each $\mathscr{P} \in \mathfrak{P}$, $\mathscr{P}(\bar x_1, \bar x_2, \cdots \bar x_n) = 0$. Observe that for a solution $(\bar x_1, \bar x_2, \cdots \bar x_n) \in \mathbb{R}^n$ to $\mathfrak{P}$, it holds that $\bar x_i \in \mathbb{R}$ for $i \in [n]$. %Since we cannot represent real numbers in a computer, we will be using a uni-variate polynomial representation of each $\bar x_i$, $i \in [n]$, whose existence is guaranteed by Theorem 13.17 in the book by Basu et al.~\cite{book-real-algebraic-geometry-basu}.
Below we state a result regarding a method for solving a finite set of polynomial constraints, which will be used by our algorithm. This result is a direct implication of Propositions 3.8.1 and 4.1 in \cite{Renegar:1992} (see also \cite{book-real-algebraic-geometry-basu}).

\begin{proposition}[Propositions 3.8.1 and 4.1 in \cite{Renegar:1992}]\label{prop:solving-equations}
Let $\mathfrak{C}$ be a set of $m$ polynomial constraints of degree $2$ on $n$ variables with coefficients in $\mathbb{Z}$ whose bitsizes are bounded by $\OO(1)$. Then, in time $m^{\OO(n)}$ we can decide if $\mathfrak{C}$ is satisfiable in $\mathbb{R}$. Moreover, if $\mathfrak{C}$ is satisfiable in $\mathbb{R}$, then in time $m^{\OO(n)}$ we can also compute a (satisfiable) set $\widehat{\mathfrak{C}}$ of $n$ polynomial constraints, $\mathscr{C}_1,\mathscr{C}_2,\ldots,\mathscr{C}_n$, with coefficients in $\mathbb{Z}$, where for all $i\in[n]$, we have that $\mathscr{C}_i$ is an equality constraint on $X_i$ (only), and a solution of $\widehat{\mathfrak{C}}$ is also a solution of $\mathfrak{C}$.
\end{proposition}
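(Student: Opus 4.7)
The plan is to appeal to the critical-point machinery of real algebraic geometry, specialized to the degree-$2$ setting; this proposition is essentially a packaged specialization of Renegar's general quantifier-elimination and sample-point algorithms to the regime where all defining polynomials have degree $2$ and integer coefficients of bounded bitsize. I would carry out the argument in three stages.

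\emph{Stage 1 (Satisfiability).} View the realization set $S=\{x\in\mathbb{R}^n : x \text{ satisfies every } \mathscr{C}\in\mathfrak{C}\}$ as the semi-algebraic set cut out by $m$ polynomials of degree at most $2$. Renegar's critical point method produces, in time $m^{\OO(n)}$, a finite family of \emph{sample points} that meets every connected component of every realizable sign-condition on the defining family. Each sample point is represented symbolically by a univariate algebraic datum whose bitsize is polynomial in the input. Deciding $S\ne\emptyset$ then reduces to evaluating every constraint in $\mathfrak{C}$ at every sample point, which runs within the same bound.

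\emph{Stage 2 (Univariate representation of a witness).} If $S\ne\emptyset$, pick a sample point $x^*\in S$. It is produced with a rational univariate representation: a polynomial $f(T)\in\mathbb{Z}[T]$, rational functions $g_1,\ldots,g_n\in\mathbb{Q}(T)$, and a distinguished real root $\tau$ of $f$ (isolated by a Thom encoding or by a rational interval) with $x^*_i=g_i(\tau)$. For each $i\in[n]$, form $P_i(X_i) = \mathrm{Res}_T\bigl(f(T),\; X_i\cdot \mathrm{den}(g_i)(T)-\mathrm{num}(g_i)(T)\bigr)\in\mathbb{Z}[X_i]$; after clearing denominators this is a univariate integer polynomial with $P_i(x^*_i)=0$.

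\emph{Stage 3 (Assembly of $\widehat{\mathfrak{C}}$).} Take $\mathscr{C}_i$ to be the equality $P_i(X_i)=0$, replacing $P_i$ by the irreducible $\mathbb{Z}$-factor vanishing at $x^*_i$. To ensure that \emph{any} common real zero of $\mathscr{C}_1,\ldots,\mathscr{C}_n$ actually lies in $S$, refine each $P_i$ so that it has a unique real root near $x^*_i$; this can be enforced by taking, if needed, a further univariate equality that absorbs a Thom-encoding of the correct root into a single polynomial in $X_i$ (for instance by squaring an isolating univariate witness). All auxiliary operations---resultants, factorization over $\mathbb{Z}$, and root isolation---run in time polynomial in the representation size, so the overall cost remains $m^{\OO(n)}$, dominated by Stage~1.

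The main obstacle is Stage~1: calibrating the critical point method so that its cost stays $m^{\OO(n)}$, despite having to account for $\Theta(m^n)$ many sign-condition combinations, and controlling the bitsize of the algebraic numbers it produces; this is the technical core of Renegar's argument and uses the degree bound $d=2$ crucially. Stages 2 and 3 are then standard computer-algebra bookkeeping, with the only subtle point being to guarantee that $\widehat{\mathfrak{C}}$ has no extraneous real solutions outside~$S$.
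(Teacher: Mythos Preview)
The paper does not prove this proposition; it is quoted verbatim as a black-box import from Renegar~\cite{Renegar:1992} (Propositions~3.8.1 and~4.1 there), with no argument supplied. So there is no ``paper's own proof'' to compare against---the authors simply invoke the result.

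Your sketch is a reasonable high-level account of how sample-point algorithms in real algebraic geometry (Renegar, or the later Basu--Pollack--Roy framework) deliver such statements: decide emptiness by producing algebraic sample points in every cell, then read off a univariate description of one witness. Stages~1 and~2 are accurate in spirit.

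There is, however, a genuine gap in Stage~3. You want each $\mathscr{C}_i$ to be a univariate \emph{equality} $P_i(X_i)=0$ with integer coefficients, and you want \emph{every} real solution of $\widehat{\mathfrak{C}}$ to lie in $S$ (this reading is what the paper's downstream use---approximating ``a solution of $\widehat{\mathfrak{C}}$'' and concluding it is close to a solution of $\mathfrak{C}$---actually requires). But the solution set of $\widehat{\mathfrak{C}}$ is the Cartesian product $\prod_i \{P_i=0\}$, and you cannot in general force a univariate integer polynomial to have a \emph{unique} real root equal to a prescribed algebraic number: for instance, no polynomial in $\mathbb{Z}[X]$ vanishes at $\sqrt{2}$ without also vanishing at $-\sqrt{2}$, by Galois symmetry. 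Your proposed fix (``squaring an isolating univariate witness'' or ``absorbing a Thom encoding into a single polynomial'') does not survive this obstruction for pure integer equalities. What Renegar actually outputs is a univariate representation \emph{together with} sign data (a Thom encoding or an isolating interval) that singles out the correct root; the proposition as stated in the paper is a slight oversimplification of that output, and your Stage~3 inherits the same imprecision rather than resolving it.
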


\section{\connCDTfull: Generating Polynomials}\label{sec:3-conn-DR}
In this section, we generate a set of polynomials that encodes the realizability of a triangulation as a \DT\ in the case where the outer face of the \DT\ is known. More precisely, we suppose that the outer faces of $G$ and the \DT\ are the same. For the general case where we might not know a priori which is the face in $G$ that is supposed to be the outer face of the \DT\ (this is the case when $G$ is a maximal planar graph), we will ``guess'' the outer face and then use our restricted version to solve the problem. Formally, we solve the following problem. 

\defproblemout{\connCDTfull\ (\connCDT)}{A \triangulation\ $G$ with outer face $f^*$.}{A set of polynomial constraints $\textsf{Const}(G)$ such that $\textsf{Const}(G)$ is satisfiable if and only if $G$ is realizable as a \DT\ with $f^*$ as the outer face.}

Let $(G,f^*)$ be an instance of \connCDT, and let $n$ denote $|V(G)|$. We denote $V(G)$ by the set $\{v_1,v_2, \cdots v_n\}$. Note that except possibly $f^*$, each of the faces of $G$ is bounded by a cycle on three vertices. With each $v_i \in V(G)$ we associate two variables, $X_i$ and $Y_i$, which correspond to the values of the $x$ and $y$ coordinates of $v_i$ in the plane. Furthermore, we let $P_i$ denote the vector $(X_i,Y_i)$.
We let $\bar X$ denote the value that some solution of $\textsf{Const}(G)$ assigns to the variable $X$. Accordingly, we denote $\bar{P}_i=(\bar{X}_i,\bar{Y}_i)$. For the sake of clarity, we sometimes abuse the notation $\bar{P}_i$ by letting it denote both $\bar{P}_i$ and $P_i$  (this is done in situations where both interpretations are valid).

Our algorithm is based on the computation of two sets of polynomial constraints of bounded degree and integer coefficients. Informally, we have one set of inequalities which ensures that the points to which vertices of $f^*$ are mapped are in convex position, and another set of inequalities which ensures that for each $(v_i,v_j) \in E(G)$, there exists a disc containing $(\bar X_i,\bar Y_i)$ and $(\bar X_j,\bar Y_j)$ on its boundary and excluding all other points $(\bar X_{k}, \bar Y_k)$. (While other sets of inequalities may be devised to ensure these properties, we subjectively found the two sets presented here the easiest to employ.) 

\subsection{Inequalities Ensuring that the Outer Face Forms the Convex Hull}\label{sec:convexHull}

We first generate the set of polynomial constraints ensuring that the points associated with the vertices in $f^*$ form the convex hull of the output point set. Here, we also ensure that the vertices in $f^*$ have the same cyclic ordering (given by the cycle bounding $f^*$) as the points corresponding to them have in the convex hull. Note that the edges of the convex hull are present in any Delaunay triangulation~\cite{comp-geom-berg}. Moreover, the convex hull of a point set forms the outer face of its \DT. To formulate our equations, we rely on the notions of \emph{left and right turns}. Their definitions are the same as those in the book~\cite{book-algorithms-cormen}, which uses \emph{cross product} to determine whether a turn is a left turn or a right turn. For the sake of clarity, we also explain these notions below.

\subparagraph{Left and Right Turns.} Consider two vectors (or points) $\bar{P}_1$ and $\bar{P}_2$, denoting some $(x_1,y_1)$ and $(x_2,y_2)$, respectively. The cross product $\bar{P}_1 \times \bar{P}_2$ of $\bar{P}_1$ and $\bar{P}_2$ is defined as~follows. 

$ \bar{P}_1 \times \bar{P}_2 = \begin{vmatrix}
x_1& x_2\\
y_1& y_2
\end{vmatrix} = x_1y_2 - x_2y_1.$

If $\bar{P}_1 \times \bar{P}_2 > 0$, then $\bar{P}_1$ is said to be {\em clockwise} from $\bar{P}_2$ (with respect to the origin $(0,0)$). Else, if $\bar{P}_1 \times \bar{P}_2 < 0$, then $\bar{P}_1$ is said to be {\em counterclockwise} from $\bar{P}_2$. Otherwise (if $\bar{P}_1 \times \bar{P}_2 = 0$), $\bar{P}_1$ and $\bar{P}_2$ are said to be {\em collinear}. Given line segments $\overline{P_0P_1}$ and $\overline{P_1P_2}$, we would like to determine the type of turn taken by the angle $\angle P_0P_1P_2$. To this end, we check whether the directed segment $\overline{P_0P_2}$ is clockwise or counterclockwise from $\overline{P_0P_1}$. Towards this, we first compute the cross product $(\bar{P}_2- \bar{P}_0) \times (\bar{P}_1- \bar{P}_0)$. If $(\bar{P}_2- \bar{P}_0) \times (\bar{P}_1- \bar{P}_0)>0$, then $\overline{P_0P_2}$ is clockwise from $\overline{P_0P_1}$, and we say that we take a \emph{right turn} at $\bar{P}_1$. Else, if $(\bar{P}_2- \bar{P}_0) \times (\bar{P}_1- \bar{P}_0)<0$, then $\overline{P_0P_2}$ is counterclockwise from $\overline{P_0P_1}$, and we say that we take a \emph{left turn} at $\bar P_1$. Otherwise, we make {\em no turn} at $\bar P_1$. Note that the computation of $(\bar{P}_2- \bar{P}_0) \times (\bar{P}_1- \bar{P}_0)$ can be done as follows.

$(\bar{P}_2- \bar{P}_0) \times (\bar{P}_1- \bar{P}_0) = \begin{vmatrix}
x_2-x_0& x_1-x_0\\
y_2-y_0& y_1-y_0
\end{vmatrix} = x_2y_1 - x_2y_0 - x_0y_1-x_1y_2 + x_1y_0 + x_0y_2.$

\subparagraph{The Polynomials.}
For three vectors (or points) $\bar{P_0}=(x_0,y_0),\bar{P_1}=(x_1,y_1)$ and $\bar{P_2}=(x_2,y_2)$, by $\textsf{Con}(\bar{P_0},\bar{P_1},\bar{P_2})$ we denote the polynomial $x_2y_1 - x_2y_0 - x_0y_1-x_1y_2 + x_1y_0 + x_0y_2$. Note that $\textsf{Con}(\bar{P_0},\bar{P_1},\bar{P_2})$ determines whether we have a right, left or no turn at $\bar{P}_1$.

Before stating the constraints based on these polynomials, let us recall the well-known fact stating that a non-intersecting polygon is convex if and only if every interior angle of the polygon is less than $180^\circ$. While we ensure the non-intersecting constraint later, the characterization of each angle being less than $180^\circ$ is the same as taking a \emph{right} (or \emph{left}) turn at $P_j$ for every three consecutive points $P_i,P_j$ and $P_k$ of the polygon. We will use this characterization to enforce convexity on the points corresponding to the vertices in $V(f^*)$. Let us also recall that $f^*$ is a cycle $C^*$ in $G$. Next, whenever we talk about consecutive vertices in $C^*$, we always follow clockwise direction.

For every three consecutive vertices $v_i,v_j$ and $v_k$ in $C^*$, we add the following inequality:
$$\textsf{Con}(P_i,P_j,P_k)>0.$$
These inequalities ensure that in any output point set, the points corresponding to vertices in $V(f^*)$ are in convex position (together with the non-intersecting condition to be ensured later). 

Next, we further need to ensure that all the points which correspond to vertices in $V(G) \setminus V(f^*)$ belong to the interior of the convex hull formed by the points corresponding to vertices in $V(f^*)$ and the polygon formed by the points corresponding to $V(f^*)$ is non-self intersecting. For this purpose, we crucially rely on the following property of convex hulls (or convex polygons): For any edge of the convex hull, it holds that all the points, except for the endpoints of the edge, are located in one of the sides of the edge. Using this property, we know that for any two consecutive vertices $v_i$ and $v_j$ in $C^*$, all points are on one side of the line associated with $v_i$ and $v_j$. Since at each $v_i \in C^*$ we ensure that we turn right, we must have all the points located on the right of the line defined by the edge $(v_i,v_j)$. This, in turn, implies that for every pair of consecutive vertices $v_i$ and $v_j$ in $C^*$, for any vertex $v_k \in V(G) \setminus V(f^*)$, we must be turning left at $v_k$ (according to the ordered triplet $(v_i,v_k,v_j)$). Hence, we add the following inequalities:
$$\textsf{Con}(P_i,P_k,P_j)<0.$$
where $v_i$ and $v_j$ are consecutive vertices of $C^*$ and $v_k \in V(G) \setminus \{v_i,v_j\}$.

We denote the set of inequalities generated above by $\textsf{Con}(G)$.

\subsection{Inequalities Guaranteeing Existence of Edges}

For each edge $(v_i,v_j) \in E(G)$, we add two new variables, $X_{ij}$ and $Y_{ij}$, to indicate the coordinates of the centre of a disc that realizes the edge $(v_i,v_j)$. There might exist many discs that realize the edge $(v_i,v_j)$, but we are interested in only one such disc, say $C_{ij}$. Note that $C_{ij}$ should contain $(\bar X_i,\bar Y_i)$ and $(\bar X_j,\bar Y_j)$ on its boundary, and it should not contain any $(\bar X_k,\bar Y_k)$ such that $k \notin \{i,j\}$. Towards this, for each edge $(v_i,v_j) \in E(G)$, we add a set of inequalities that we denote by $\textsf{Dis}(v_i,v_j)$. Note that the radius $r_{ij}$ of $C_{ij}$ is given by $r^2_{ij} = (X_i-X_{ij})^2+(Y_i-Y_{ij})^2$ (if $(X_i,Y_i)$ lies on the boundary) and by $r^2_{ij} = (X_j-X_{ij})^2+(Y_j-Y_{ij})^2$ (if $(X_j,Y_j)$ lies on the boundary). Therefore, we want to ensure the following.

$$(X_j-X_{ij})^2+(Y_j-Y_{ij})^2=(X_i-X_{ij})^2+(Y_i-Y_{ij})^2$$
$$\Rightarrow X_i^2 - X_j^2 + Y_i^2 - Y_j^2 - 2X_{ij}X_i - 2Y_{ij}Y_i + 2X_{ij}X_j + 2Y_{ij}Y_j = 0.$$

Hence, we add the above constraint to $\textsf{Dis}(v_i,v_j)$. Further, we want to ensure that for each $k \in [n] \setminus \{i,j\}$, $(X_k,Y_k)$ does not belong to $C_{ij}$. Therefore, for each $k \in [n] \setminus \{i,j\}$, the following must hold.

$$(X_k-X_{ij})^2+(Y_k-Y_{ij})^2-(X_i-X_{ij})^2-(Y_i-Y_{ij})^2 >0$$
$$\Rightarrow X_k^2 - X_i^2 + Y_k^2 - Y_i^2 - 2X_{ij}X_k - 2Y_{ij}Y_k + 2X_{ij}X_i + 2Y_{ij}Y_i > 0$$

Hence, we also add the above constraint to $\textsf{Dis}(v_i,v_j)$ for $k \in [n] \setminus \{i,j\}$. Overall, we denote $\textsf{Dis}(G)=\displaystyle{\bigcup_{(v_i,v_j) \in E(G)}\textsf{Dis}(v_i,v_j)}$. This completes the description of all inequalities relevant to this section.

\subsection{Correctness}
Let us denote $\textsf{Const}(G)= \textsf{Con}(G) \cup \textsf{Dis}(G)$. We begin with the following observation. Here, to bound the number of variables, we rely on the fact that $G$ is a planar graph, its number of edges is upper bounded by $3n$, and hence in total we introduced less than $8n$ variables.

\begin{observation}\label{obs:ConstProperties}
The number of constraints in $\textsf{Const}(G)$ is bounded by $\OO(n^2)$ and the total number of variables is bounded by $\OO(n)$. Moreover, each constraint in $\textsf{Const}(G)$ is of degree 2, and its coefficients belong to $\{-2,-1,0,1,2\}$.
\end{observation}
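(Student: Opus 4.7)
The proof is essentially a careful bookkeeping of the construction carried out in the previous two subsections, so the plan is to verify each of the four claims in turn without performing any nontrivial manipulations.

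First I would bound the number of variables. For each vertex $v_i \in V(G)$ we introduced the two variables $X_i$ and $Y_i$, contributing $2n$ variables. For each edge $(v_i,v_j) \in E(G)$ we introduced $X_{ij}$ and $Y_{ij}$, contributing $2|E(G)|$ variables. Since $G$ is planar, $|E(G)| \leq 3n-6$, so the total number of variables is at most $2n + 2(3n-6) = 8n-12 = \OO(n)$.

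Next I would count the constraints. The set $\textsf{Con}(G)$ consists of one inequality per triple of consecutive vertices on $C^*$ (at most $n$ of these) plus one inequality per pair of consecutive vertices on $C^*$ and per vertex $v_k \in V(G) \setminus \{v_i,v_j\}$; since $|V(C^*)| \leq n$, this contributes $\OO(n^2)$ constraints. The set $\textsf{Dis}(G)$ contributes, for each edge, one equality constraint (stating that $(X_i,Y_i)$ and $(X_j,Y_j)$ are equidistant from $(X_{ij},Y_{ij})$) together with $n-2$ strict inequality constraints (one per vertex $v_k \notin \{v_i,v_j\}$), yielding $|E(G)| \cdot (n-1) \leq (3n-6)(n-1) = \OO(n^2)$ constraints. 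Summing the two contributions gives $|\textsf{Const}(G)| = \OO(n^2)$, as required.

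Finally I would verify the degree and coefficient bounds by inspecting the explicit polynomials. For $\textsf{Con}(G)$, every polynomial has the form $x_2 y_1 - x_2 y_0 - x_0 y_1 - x_1 y_2 + x_1 y_0 + x_0 y_2$, so each monomial is a product of exactly two variables with coefficient in $\{-1,+1\}$. For $\textsf{Dis}(G)$, the equality and inequality polynomials written out in the previous subsection consist only of squared variables and products of two variables, with coefficients in $\{-2,-1,+1,+2\}$. Hence every constraint in $\textsf{Const}(G)$ has degree exactly $2$ and coefficients in $\{-2,-1,0,1,2\}$, completing the verification. No step presents any real obstacle; the entire statement is a direct consequence of the explicit forms of the polynomials produced by the construction, together with the planarity bound on $|E(G)|$.
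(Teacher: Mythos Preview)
Your proposal is correct and follows essentially the same approach as the paper: the paper treats this observation as immediate from the construction, giving only a one-line justification for the variable bound (planarity implies $|E(G)|\le 3n$, hence fewer than $8n$ variables), while you simply spell out the same bookkeeping in more detail.
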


Now, we state the central lemma establishing the correctness of our algorithm for~\connCDT. 

\begin{lemma}
\label{lem:correctness-dr-fwd-3-conn}
A triangulation $G$ with outer face $f^*$ is realizable as a \DT\ with $f^*$ as its outer face if and only if $\textsf{Const}(G)$ is satisfiable.
\end{lemma}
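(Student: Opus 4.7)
The plan is to prove both directions by interpreting the variables of $\textsf{Const}(G)$ as planar coordinates for the vertices of $G$ and as centers of witness discs for the edges of $G$. For the forward direction, assume that $G$ is realized by a point set $P=\{p_1,\ldots,p_n\}$ with $p_i$ corresponding to $v_i$ and with $f^*$ as the outer face of $\mathscr{DT}(P)$. After possibly reflecting $P$ so that $V(f^*)$ appears on the convex hull in the clockwise cyclic order prescribed by $C^*$, I set $(\bar X_i,\bar Y_i):=p_i$. Since the vertices of $f^*$ form the convex hull of $P$, every three consecutive hull vertices make a right turn and every interior point lies on the correct side of every supporting line of the hull, which together satisfy all inequalities in $\textsf{Con}(G)$. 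For each $(v_i,v_j)\in E(G)$, the Delaunay condition supplies a disc with $p_i,p_j$ on its boundary and every other $p_k$ strictly outside its closure; taking $(\bar X_{ij},\bar Y_{ij})$ to be its centre satisfies the equality and the $n-2$ strict inequalities of $\textsf{Dis}(v_i,v_j)$.

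For the backward direction, fix a satisfying assignment and let $P:=\{\bar P_1,\ldots,\bar P_n\}$ with $\bar P_i=(\bar X_i,\bar Y_i)$. The constraints in $\textsf{Con}(G)$ force every interior angle of the polygon traced along $C^*$ to be strictly less than $180^{\circ}$ and every interior-vertex point to lie strictly on the inside of each supporting line of this polygon; consequently the polygon is simple and convex and its interior contains all remaining points, so the convex hull of $P$ is exactly $V(f^*)$ in the cyclic order of $C^*$. For each $(v_i,v_j)\in E(G)$, the constraints in $\textsf{Dis}(v_i,v_j)$ yield a disc with $\bar P_i,\bar P_j$ on its boundary and every other $\bar P_k$ strictly outside, which is precisely the condition for $(\bar P_i,\bar P_j)$ to be a Delaunay edge of $P$; hence $E(G)\subseteq E(\mathscr{DG}(P))$ under the bijection $v_i\mapsto\bar P_i$. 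The same witness discs, together with the convexity inequalities, also imply that the $\bar P_i$ are pairwise distinct: two vertices of $f^*$ are separated by $\textsf{Con}(G)$, and any two adjacent vertices lie in a common triangle of $G$ (since $G$ is a triangulation), so the witness disc of an adjacent edge strictly excludes one of them from the other.

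It remains to upgrade the edge containment to an isomorphism of plane triangulations. Both $G$ and $\mathscr{DT}(P)$ are triangulations of a convex polygon with $n_h:=|V(f^*)|$ boundary vertices and $n-n_h$ interior vertices, so by Euler's formula each has exactly $3n-n_h-3$ edges. Therefore the chain $E(G)\subseteq E(\mathscr{DG}(P))\subseteq E(\mathscr{DT}(P))$ is a chain of equalities; in particular $\mathscr{DG}(P)=\mathscr{DT}(P)$, so $P$ is in general position and its Delaunay triangulation is unique. Since $G$ and $\mathscr{DT}(P)$ share the same vertex bijection, the same edge set, and the same outer-face cycle in the same cyclic order, they coincide as plane graphs; to make this last step rigorous I would stellate the outer face in both graphs to reduce to a $3$-connected maximal planar graph and invoke Whitney's uniqueness-of-embedding theorem (via Propositions~\ref{face-boundaries-conn-graph} and~\ref{convex-drawing-plane-graph}). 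I expect this final passage from equality of edge sets to equality of planar embeddings to be the main technical nuisance, together with the bookkeeping for orientation (clockwise versus counterclockwise traversal of $C^*$) in the forward direction.
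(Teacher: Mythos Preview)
Your proposal is correct and follows essentially the same approach as the paper: set the vertex variables from a realizing point set and the edge variables from witness-disc centres in the forward direction, and in the reverse direction use the witness discs to show $E(G)\subseteq E(\mathscr{DT}(P))$, then finish by an edge-count (the paper cites Theorem~9.1 of~\cite{comp-geom-berg}, which is exactly your Euler-formula computation $|E|=3n-n_h-3$). The paper's write-up is terser and simply stops at $E(G)=E(\mathscr{DT}(P))$; your additional care about orientation of $C^*$, distinctness of the $\bar P_i$, and the passage from equal edge sets to equality of plane embeddings (via stellation and Whitney) addresses points the paper leaves implicit, but these are refinements rather than a different argument.
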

\begin{proof}
Let $G$ be a triangulation realizable as a \DT\ with $f^*$ as the outer face of the  \DT. Then, there exists $P \subseteq \mathbb{R}^2$ such that $\mathscr{DT}(P)$ is isomorphic to $G$ and $f^*$ is the outer face of $\mathscr{DT}(P)$. Furthermore, for each $(\bar P_i,\bar P_j) \in E(\mathscr{DT}(P))$, there exists a disc $C_{ij}$ which contains $\bar P_i$ and $\bar P_j$ on its boundary, and which contains no point $\bar P_k$, $k \in [n] \setminus \{i,j\}$, on neither its boundary nor its interior. We let $\bar P_{ij}$ denote the centre of $C_{ij}$. Let $\cal P$ be the vector assigning $\bar P_i$ to the vertex $v_i \in V(G)$ and $\bar P_{ij}$ to the centre of the disc $C_{ij}$. We note that the vertices of $f^*$ are in convex position in $\mathscr{DT}(P)$. Clearly, we then have that  ${\cal P}$ satisfies $\textsf{Const}(G)$. This concludes the proof of the forward direction.

In the reverse direction, consider some ${\cal P}$ that satisfies $\textsf{Const}(G)$. By our polynomial constraints, ${\cal P}$ assigns some $\bar P_i$ to each vertex $v_i \in V(G)$, such that for each edge $(v_i,v_j) \in E(G)$, it lets $\bar P_{ij}$ be the centre of a disc $C_{ij}$ containing $\bar P_i$ and $\bar P_j$ (on its boundary) and no point $\bar P_k$ where $k \notin \{i,j\}$. Further, we let $P= \{\bar P_i \mid i \in [n]\}$. By the construction of $\textsf{Const}(G)$, it follows that if $(v_i,v_j) \in E(G)$, then $(\bar P_i, \bar P_j) \in \mathscr{DT}(P)$ and the points in $P$ corresponding to vertices in $V(f^*)$ form the convex hull of $P$. This implies that the points corresponding to the vertices in $V(f^*)$ are on the outer face of $\mathscr{DT}(P)$. From Theorem 9.1 in~\cite{comp-geom-berg}, it follows that $|E(G)| \leq  |E(\mathscr{DT}(P))|$. Thus, $E(G) = E(\mathscr{DT}(P))$. This concludes the proof of the reverse direction.
  \end{proof}

%We thus derive the following result.
The next theorem follows from the construction of $\textsf{Const}(G)$, Observation \ref{obs:ConstProperties} and Lemma~\ref{lem:correctness-dr-fwd-3-conn}.

\begin{theorem}\label{thm:output-polynmials-dt}
Let $G$ be a triangulation on $n$ vertices with $f^*$ as the outer face. Then, in time $\OO(n^2)$, we can output a set of polynomial constraints $\textsf{Const}(G)$ such that $G$ is realizable as a \DT\ with $f^*$ as its outer face if and only if $\textsf{Const}(G)$ is satisfiable. Moreover, $\textsf{Const}(G)$ consists of $\OO(n^2)$ constraints and $\OO(n)$ variables, where each constraint is of degree 2 and with coefficients only from $\{-2,-1,0,1,2\}$.
\end{theorem}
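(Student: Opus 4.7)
The plan is to verify that Theorem~\ref{thm:output-polynmials-dt} is essentially a repackaging of the material already developed in Section~\ref{sec:3-conn-DR}, leaving only the running time and the formal compilation of the bounds to check. I would proceed in three short steps, as described below.

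First, the ``if and only if'' statement is exactly Lemma~\ref{lem:correctness-dr-fwd-3-conn}, so correctness is immediate. The bounds on the number of constraints, number of variables, degree, and coefficient set are precisely the content of Observation~\ref{obs:ConstProperties}. Thus the only nontrivial point that remains is to argue that $\textsf{Const}(G)=\textsf{Con}(G)\cup\textsf{Dis}(G)$ can actually be generated in time $\OO(n^2)$.

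Second, I would bound the construction time of $\textsf{Con}(G)$. Walking once around the cycle $C^*$ bounding $f^*$ yields, in time $\OO(|V(f^*)|)$, all triples of consecutive vertices along $C^*$, and for each such triple the convexity inequality $\textsf{Con}(P_i,P_j,P_k)>0$ is written down in constant time. The ``interior point'' inequalities $\textsf{Con}(P_i,P_k,P_j)<0$ are generated by iterating over each edge $(v_i,v_j)$ of $C^*$ and each $v_k\in V(G)\setminus\{v_i,v_j\}$, giving $\OO(|V(f^*)|\cdot n)=\OO(n^2)$ inequalities and thus $\OO(n^2)$ construction time. In the case of $\textsf{Dis}(G)$, I use that $G$ is planar and hence $|E(G)|\le 3n-6$: for each of the $\OO(n)$ edges $(v_i,v_j)$ I emit one equality encoding equidistance of $P_i$ and $P_j$ from the disc centre $(X_{ij},Y_{ij})$, together with $n-2$ ``exclusion'' inequalities for the remaining vertices, each of constant size. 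The total work is $\OO(n)\cdot\OO(n)=\OO(n^2)$, and the whole of $\textsf{Const}(G)$ is produced within this bound.

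Third, I would compile the statistics. The vertex variables contribute $2n$ variables, the disc-centre variables contribute $2|E(G)|\le 6n$ variables, so the total is $\OO(n)$ variables; the constraint count is $\OO(n)+\OO(n^2)+\OO(n)+\OO(n^2)=\OO(n^2)$. Inspecting the explicit forms of $\textsf{Con}(P_0,P_1,P_2)$ and of the two polynomials placed in $\textsf{Dis}(v_i,v_j)$ shows that every monomial has total degree at most $2$ and every coefficient lies in $\{-2,-1,0,1,2\}$. Since there is essentially no ``hard'' step here, I do not anticipate any obstacle; the only point requiring a moment's care is confirming that the $\OO(n^2)$ bound on $|\textsf{Con}(G)|$ is tight enough when $|V(f^*)|$ is large, which follows from the planar bound $|V(f^*)|\le n$. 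With these observations, the theorem is established by directly invoking Lemma~\ref{lem:correctness-dr-fwd-3-conn} and Observation~\ref{obs:ConstProperties}.
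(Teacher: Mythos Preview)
Your proposal is correct and matches the paper's own approach: the paper simply states that the theorem follows from the construction of $\textsf{Const}(G)$, Observation~\ref{obs:ConstProperties}, and Lemma~\ref{lem:correctness-dr-fwd-3-conn}. Your write-up is in fact more thorough, as you explicitly verify the $\OO(n^2)$ construction time that the paper leaves implicit in ``the construction.''
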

%\begin{proof}
%The proof of the theorem follows directly from the construction of $\textsf{Const}(G)$, Observation \ref{obs:ConstProperties} and Lemma~\ref{lem:correctness-dr-fwd-3-conn}.
%  \end{proof}

\section{\connCDTfull: Replacing Points by Discs}\label{sec:exists-circle-dt}

Let $G$ be a \triangulation\ on $n$ vertices with $f^*$ as its outer face. Suppose that $G$ is realizable as a \DT\ where the points corresponding to vertices in $V(f^*)$ belong to the outer face. By Theorem~\ref{thm:output-polynmials-dt}, it follows that $\textsf{Const}(G)$ is satisfiable. Let $n^*$ denote the number of variables of $\textsf{Const}(G)$. % We note here that we use same notations as described in Section~\ref{sec:3-conn-DR}.
Since $\textsf{Const}(G)$ is satisfiable, there exists ${\cal Q}$ satisfying $\textsf{Const}(G)$. Let $\bar Q_i$ be the value assigned to the vertex $v_i \in V(G)$ for $i\in[n]$. Let $Q=\{\bar Q_i \mid v_i \in V(G)\}$. Recall that apart from assigning points in the plane to vertices in $V(G)$, ${\cal Q}$ assigns to each $(v_i,v_j) \in E(G)$, a point $\bar Q_{ij}$ corresponding to the centre of some disc, say $C'_{ij}$, containing $\bar Q_i,\bar Q_j$ on its boundary and excluding all other points in $Q$. 

In this section, we prove that for any given $\beta \in \mathbb{R}^+$, there exists a set of discs of radius $\beta$, one for each vertex in $V(G)$, with the following property. If for every $v_i\in V(G)$, we choose some point $\bar P_{C_i}$ inside or on the boundary of its disc $C_i$, we get that $\mathscr{DT}(Q)$ and the \DT\ of our set of chosen points are isomorphic.

We start with two simple observations, where the second directly follows from the definition of the constraints in $\textsf{Const}(G)$.

\begin{observation}\label{obs:scaling-distance}
Let $(a,b), (x,y) \in \mathbb{R}^2$ be two points and $\alpha \in \mathbb{R}^+$. Then, $\textsf{dist}((\alpha a,\alpha b),$ $(\alpha x,\alpha y)) = \alpha \cdot \textsf{dist}((a,b),(x,y))$.
\end{observation}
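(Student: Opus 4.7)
The plan is to prove this by direct algebraic computation, unwinding the definition of Euclidean distance and factoring out the positive scalar $\alpha$. By definition,
\[
\textsf{dist}((\alpha a, \alpha b), (\alpha x, \alpha y)) = \sqrt{(\alpha a - \alpha x)^2 + (\alpha b - \alpha y)^2}.
\]
The strategy is then to factor $\alpha$ from inside each squared term to obtain $\sqrt{\alpha^2\bigl((a-x)^2 + (b-y)^2\bigr)}$, and finally to pull $\alpha$ out of the square root.

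The only subtlety is the step $\sqrt{\alpha^2} = \alpha$. In general one only has $\sqrt{\alpha^2} = |\alpha|$, but the hypothesis $\alpha \in \mathbb{R}^+$ guarantees $|\alpha| = \alpha$, so the extraction is valid without introducing an absolute value. After pulling $\alpha$ outside, the remaining square root is exactly $\textsf{dist}((a,b),(x,y))$ by definition, yielding the claimed identity.

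Given the elementary nature of the statement, there is no real obstacle: the entire argument is a short chain of algebraic identities, and the positivity assumption on $\alpha$ is the only place where the hypothesis is actually used. I expect the written proof to consist of a single display equation with two or three equalities.
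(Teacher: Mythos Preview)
Your proposal is correct; the direct algebraic computation you describe is exactly the standard verification, and the positivity of $\alpha$ is indeed the only point requiring care. The paper itself states this as an observation without any proof, so there is nothing further to compare.
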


\begin{observation}\label{obs:multiply-by-const-soln}
Let $G$ be a triangulation on $n$ vertices with $f^*$ as its outer face. If ${\cal Q}$ is a solution of $\textsf{Const}(G)$, then for any $\alpha \in \mathbb{R}^+$, it holds that $\alpha {\cal Q}$ also satisfies $\textsf{Const}(G)$.
\end{observation}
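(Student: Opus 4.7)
The plan is to observe that every polynomial appearing on the left-hand side of a constraint in $\textsf{Const}(G) = \textsf{Con}(G) \cup \textsf{Dis}(G)$ is \emph{homogeneous} of degree exactly $2$ in its variables. Once this is granted, the observation follows at once: for any such $\mathscr{P}$ and any $\alpha \in \mathbb{R}^+$, one has $\mathscr{P}(\alpha \bar{\mathbf{x}}) = \alpha^{2}\,\mathscr{P}(\bar{\mathbf{x}})$, and since $\alpha^{2} > 0$ the relation $\mathscr{P}(\bar{\mathbf{x}})\,\Delta\, 0$ (for any $\Delta \in \{=,>,<,\geq,\leq\}$) is preserved when $\bar{\mathbf{x}}$ is replaced by $\alpha\bar{\mathbf{x}}$. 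Applying this to each of the constraints satisfied by ${\cal Q}$ yields that $\alpha{\cal Q}$ satisfies $\textsf{Const}(G)$ as well.

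The remaining work is a direct inspection of the three explicit polynomial forms introduced in Section~\ref{sec:3-conn-DR}. First, each polynomial in $\textsf{Con}(G)$ has the shape $\textsf{Con}(P_0,P_1,P_2) = x_2 y_1 - x_2 y_0 - x_0 y_1 - x_1 y_2 + x_1 y_0 + x_0 y_2$, in which every monomial is a product of exactly one $x$-variable and one $y$-variable. Second, the edge-equality polynomial in $\textsf{Dis}(v_i,v_j)$ has the shape $X_i^2 - X_j^2 + Y_i^2 - Y_j^2 - 2X_{ij}X_i - 2Y_{ij}Y_i + 2X_{ij}X_j + 2Y_{ij}Y_j$, each of whose terms is either a squared variable or a product of two distinct variables. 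Third, the exclusion inequalities in $\textsf{Dis}(v_i,v_j)$ have the same shape with $k$ in place of one of the indices. In all three cases every term has total degree $2$, so no constant or linear terms occur.

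There is essentially no obstacle here: the homogeneity is a direct consequence of the deliberate design of $\textsf{Const}(G)$, in which every encoded relation is a scale-invariant geometric condition---the cyclic orientation of triples of points, and the strict exclusion of a point from a disc determined by two others and a free centre. One would therefore fully expect the associated polynomial constraints to be invariant under positive scaling of the entire solution vector. This scale-invariance is precisely what will be leveraged in the remainder of Section~\ref{sec:exists-circle-dt}, where rescaling a hypothetical real solution is a first step towards replacing points by discs and, eventually, towards extracting a solution with integer coordinates.
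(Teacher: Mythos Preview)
Your proposal is correct and matches the paper's approach: the paper states only that the observation ``directly follows from the definition of the constraints in $\textsf{Const}(G)$'' without further argument, and your verification that every polynomial in $\textsf{Con}(G)$ and $\textsf{Dis}(G)$ is homogeneous of degree~$2$ is exactly the natural way to unpack that remark. There is nothing to add.
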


%From observation~\ref{obs:multiply-by-const-soln}, it implies that $3\beta P \in \textsf{Zer}(\textsf{ConstEq}(G,f^*),\mathbb{R}^{n^*})$, where $3\beta P = \{(3\beta \bar X, 3\beta \bar Y) \mid (\bar X, \bar Y) \in P\}$. The constant $3\beta$ will be justified in the later part of the section.
In what follows, we create a point set $P$ such that $\mathscr{DT}(P)$ is isomorphic to $G$, where the points corresponding to vertices in $V(f^*)$ form the outer face of $\mathscr{DT}(P)$. We then show that this point set defines a set of discs with the desired property---for each  $v_i \in V(G)$, it defines one disc $C_i$ with $\bar P_i$ as centre and with radius $r^* \geq \beta > 0$ (to be determined), such that, roughly speaking, each point of $C_i$ is a valid choice for $v_i$.
For this purpose, we first define the real numbers, $d_N$, $d_C$, and $d_A$, which are necessary to determine $r^*$ and $P$. Informally, $d_N$ ensures that the discs we create around  vertices do not intersect, $d_C$ will be used to ensure existence of specific edges, $d_A$ will be used to ensure that ``convex hull property'' is satisfied. These (positive) real numbers are defined as follows.%We move to the description of $d_N$, $d_C$, $d_L$ and $d_A$.
\begin{itemize}
\item Let $\displaystyle{d_N=\min_{i,j \in [n], i \neq j} \{ \textsf{dist}(\bar Q_i, \bar Q_j) \}}$, i.e., $d_N$ is the minimum distance between any pair of distinct points in $Q$. 

\item Let $\displaystyle{d_C = \min_{i,j,k \in [n], i\neq j, i\neq k, j\neq k}\{\textsf{dist}(C'_{ij}, \bar Q_k)\}}$, i.e., $d_C$ denotes the minimum distance between a point corresponding to a vertex in $V(G)$ and a disc realizing an edge non-incident to it. (Recall that $C'_{ij}$ is defined at the beginning of this section.) Note that $d_C>0$ because in the above definition of  $d_C$, we have only considered those disc and point pairs where the point lies outside the disc.

%\item Towards the definition of $d_L$, consider some vertex $v_j \in V(f^*)$ with $v_i,v_k$ being the neighbors of $v_j$ on the cycle corresponding to the outer face $f^*$, i.e., $v_i,v_j$ and $v_k$ are consecutive vertices of $f^*$. Let $L_{ik}$ be the line containing the points $\bar Q_i$ and $\bar Q_k$, and let $d_j$ denote the distance between $\bar Q_j$ and $L_{ik}$. In this context, recall that $\mathcal{Q}$ satisfies $\textsf{Con}(G)$. In particular, it holds that $\bar Q_j$ does not lie on the line $L_{ik}$.  Let $\displaystyle{d_L= \min_{v_j \in V(f^*)}\{d_j\}}$.

\item For each edge $(v_i,v_j)$ of the cycle corresponding to the outer face $f^*$, let $L^s_{ij}$ be the line containing $\bar Q_i$ and $\bar Q_j$. Moreover, let $s_{ij} = \displaystyle{\min_{k \in [n] \setminus \{i,j\}} \{\textsf{dist}(L^s_{ij},\bar Q_k)\}}$, i.e., the minimum distance between a line of the convex hull and another point. Finally, $\displaystyle{d_A= \min_{(v_i,v_j) \in E(f^*)}\{s_{ij}\}}$. We note that $\displaystyle{d_A}>0$. This follows from the definition of $\textsf{Con}(G)$ in Section~\ref{sec:convexHull}. 
\end{itemize}

Define $r= \frac{1}{3}\min\{d_N,d_C,d_A\}$. Notice that $r, \beta > 0$. Now, we compute $r^*$ and $P$ according to three cases:

\begin{enumerate}
\item If $r \geq \beta$, then  $r^*=r$ and ${\cal P}= {\cal{Q}}$ (thus, $P=Q$).

\item Else if $1\leq r < \beta$, then ${\cal P}= \beta {\cal Q}$, where $\beta {\cal Q} = \{(\beta \bar X, \beta \bar Y) \mid (\bar X, \bar Y) \in {\cal Q}\}$ and $r^*=\beta r$. %Note that Observation~\ref{obs:multiply-by-const-soln} implies that $\beta {\cal Q} \in \textsf{Zer}(\textsf{ConstEq}(G),\mathbb{R}^{n^*})$.

\item Otherwise ($r < 1$ and $r < \beta$), ${\cal P} = \frac{\beta}{r} {\cal Q}$ and $r^*=\frac{\beta}{r}r= \beta$.

\end{enumerate}

By Observation~\ref{obs:multiply-by-const-soln}, in each of the cases described above, we have that ${\cal P}$ satisfies $\textsf{Const}(G)$. Hereafter, we will be working only with ${\cal P}$ and $r^*$ as defined above. We let $\bar P_i$ be the point assigned to the vertex $v_i$, and $P=\{\bar P_i \mid i \in [n]\}$. Moreover, we let $\bar P_{ij}$ be the centre of the disc $C_{ij}$ for the edge $(v_i,v_j) \in E(G)$ that is assigned by ${\cal P}$.

Next, we define $d^*_N,d^*_C,$ and $d^*_A$ in a manner similar to the one used to define $d_N,d_C$ and $d_A$. Let $d^*_N = \displaystyle{\min_{i,j \in [n], i\neq j}\{\textsf{dist}(\bar P_i, \bar P_j)\} \geq 3r^*}$, and $d^*_C = \displaystyle{\min_{i,j,k \in [n], i\neq j, i\neq k, j\neq k}}\{\textsf{dist}(C_{ij}, \bar P_k)\}$ $\geq 3r^*$. %To define $d^*_L$, consider a vertex $v_j \in V(f^*)$ with $v_i$ and $v_k$ being the neighbors of $v_j$ on the cycle corresponding to the outer face $f^*$. Let $L_{ik}$ be the line containing the points $\bar P_i$ and $\bar P_k$, and let $d^*_j$ be the distance between $\bar P_j$ and the line $L_{ik}$. Then, let $d^*_L= \displaystyle{\min_{v_j \in V(f^*)}\{d^*_j\}}$. 
 For each edge $(v_i,v_j)$ of the cycle corresponding to the outer face $f^*$, let $L^s_{ij}$ be the line containing $\bar P_i$ and $\bar P_j$. Further, let $s_{ij} = \displaystyle{\min_{k \in [n] \setminus \{i,j\}} \{\textsf{dist}(L^s_{ij},\bar P_k)\}}$. Finally, let $d^*_A= \displaystyle{\min_{(v_i,v_j) \in E(F)}\{s_{ij}\}}$. Note that by Observation~\ref{obs:scaling-distance}, we have that $d^*_N \geq 3r^*$, $d^*_C \geq 3r^*$, and $d^*_A \geq 3r^*$.

For each $v_i \in V(G)$, let $C_i$ be the disc of radius $r^*$ and centre $\bar P_i$. We now prove that if for each vertex $v_i \in V(G)$, we choose a point $\bar P'_i$ inside or on the boundary of $C_i$, then we obtain a point set $P'$ such that $\mathscr{DT}(P)$ and $\mathscr{DT}(P')$ are isomorphic. Furthermore, the points on the outer face of $\mathscr{DT}(P)$, and also $\mathscr{DT}(P')$, correspond to the vertices in~$V(f^*)$.

\begin{figure}[t]
\centering
\includegraphics[scale=0.21]{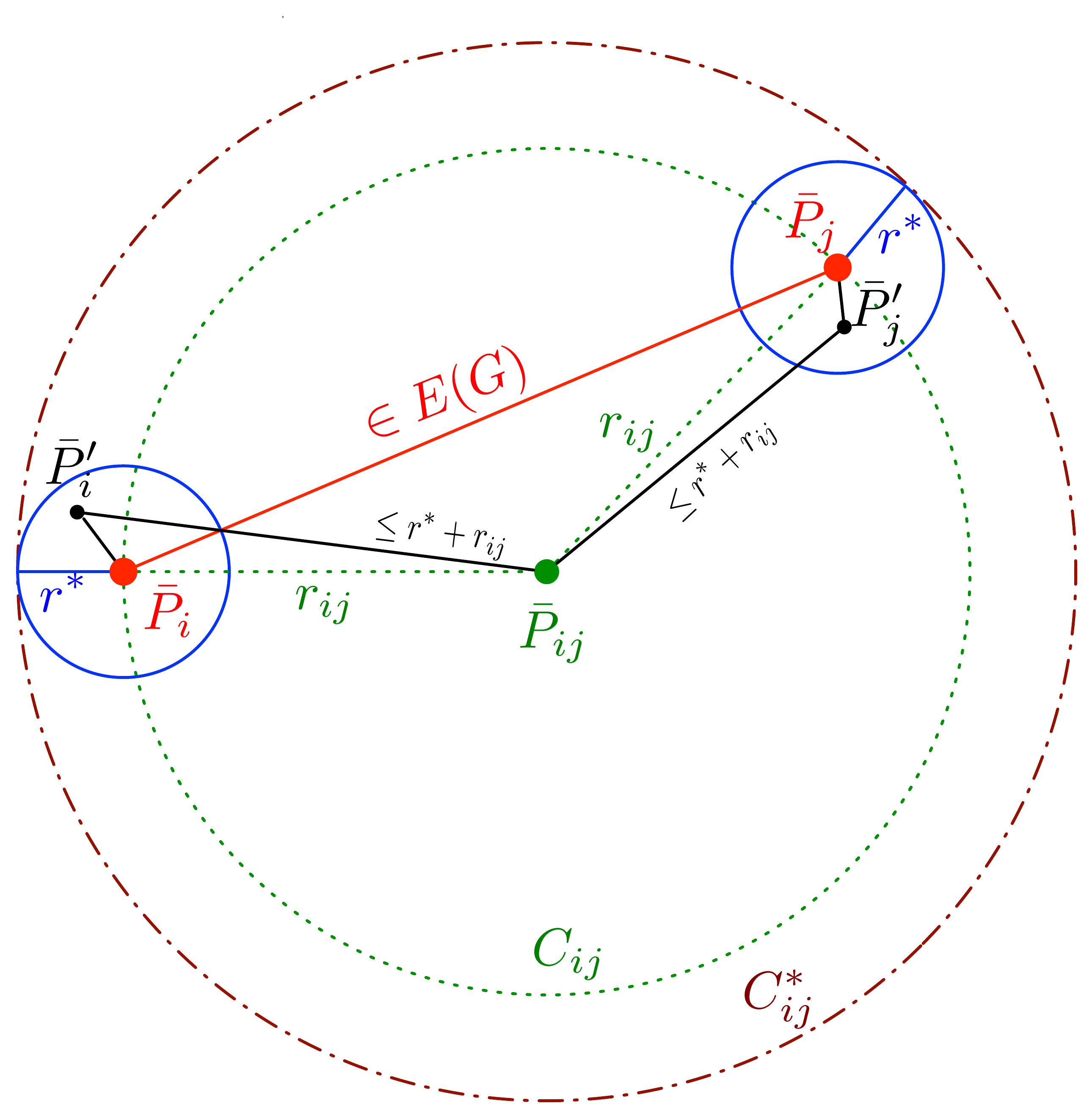}
\caption{Proving that the points $\bar P'_i$ and $\bar P'_j$ lie inside the disc $C^*_{ij}$ (Lemma~\ref{lem:edge-circle}).}
\label{fig1}
\end{figure}

\begin{figure}[t]
\centering
\includegraphics[scale=0.21]{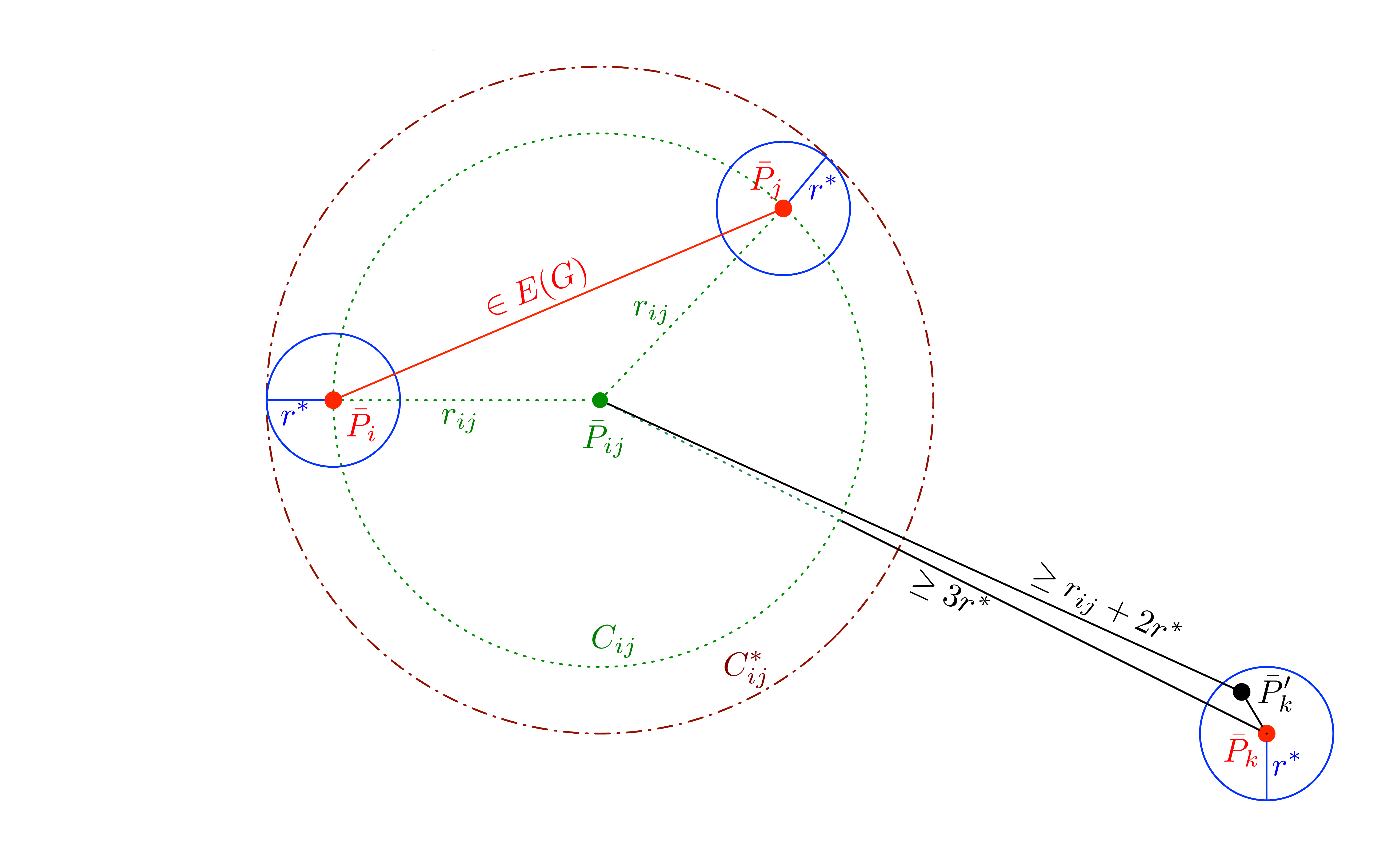}
\caption{Proving that the point $\bar P'_k$ lies outside the disc $C^*_{ij}$  (Lemma~\ref{lem:edge-circle}).}
\label{fig2}
\end{figure}

\begin{lemma}\label{lem:edge-circle}
$\mathscr{DT}(P)$ is isomorphic to $\mathscr{DT}(P')$ and the outer face of $\mathscr{DT}(P')$ consists of all the points corresponding to vertices in $V(f^*)$.
\end{lemma}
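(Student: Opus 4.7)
The plan is to establish two claims: (i) every edge of $G$ is a Delaunay edge of $P'$, and (ii) the convex hull of $P'$ has vertex set $\{\bar P'_i : v_i \in V(f^*)\}$ in the cyclic order given by $C^*$. From (i) and (ii), since both $G$ and $\mathscr{DT}(P')$ are triangulations on the same vertex set and share the same outer-face boundary, a standard count (e.g., Theorem~9.1 of~\cite{comp-geom-berg} or Euler's formula) forces $|E(G)| = |E(\mathscr{DT}(P'))|$, whence $G \cong \mathscr{DT}(P')$.

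For (i), I will use the expanded-disc construction illustrated by Figures~\ref{fig1} and~\ref{fig2}. For each $(v_i, v_j) \in E(G)$, let $C^*_{ij}$ be the disc centered at $\bar P_{ij}$ with radius $\rho_{ij} + r^*$, where $\rho_{ij} = \|\bar P_{ij} - \bar P_i\|$ is the radius of the original witness disc $C_{ij}$. The triangle inequality places $\bar P'_i, \bar P'_j$ inside $C^*_{ij}$. For any $k \notin \{i,j\}$, the bound $d^*_C \geq 3r^*$ gives $\|\bar P_{ij} - \bar P_k\| \geq \rho_{ij} + 3r^*$, so $\|\bar P_{ij} - \bar P'_k\| \geq \rho_{ij} + 2r^*$, placing $\bar P'_k$ strictly outside $C^*_{ij}$. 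A slight shrinkage of $C^*_{ij}$ then brings $\bar P'_i, \bar P'_j$ onto the boundary, producing a Delaunay witness disc for the edge $(\bar P'_i, \bar P'_j)$.

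For (ii), I will use a support-function argument against the \emph{original} edge normals. Fix $(v_i, v_j) \in E(f^*)$ with outward unit normal $\hat n_{ij}$ to $L^s_{ij}$, and set $M := \bar P_i \cdot \hat n_{ij} = \bar P_j \cdot \hat n_{ij}$. The bound $d^*_A \geq 3r^*$ gives $\bar P_k \cdot \hat n_{ij} \leq M - 3r^*$ for every $k \neq i,j$, and since $\|\delta_\ell\| \leq r^*$ for every $\ell$,
\[
\bar P'_i \cdot \hat n_{ij},\ \bar P'_j \cdot \hat n_{ij} \geq M - r^*, \qquad \bar P'_k \cdot \hat n_{ij} \leq M - 2r^* \text{ for } k \neq i,j.
\]
Thus both $\bar P'_i$ and $\bar P'_j$ strictly dominate every other point of $P'$ in direction $\hat n_{ij}$ by at least $r^*$, which shows that $\bar P'_i$ and $\bar P'_j$ are adjacent vertices of the convex hull of $P'$; applying this to every edge of $f^*$ yields a hull cycle through all $V(f^*)$-vertices in the cyclic order of $C^*$. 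For every $v_k \notin V(f^*)$, the analogous statement holds for \emph{all} directions: $\bar P_k$ is at distance $\geq 3r^*$ from every hull edge line, hence $\bar P_k \cdot \hat n \leq h_H(\hat n) - 3r^*$ for all unit $\hat n$, and combining with $h_{H'}(\hat n) \geq h_H(\hat n) - r^*$ gives $\bar P'_k \cdot \hat n \leq h_{H'}(\hat n) - r^* < h_{H'}(\hat n)$, which places $\bar P'_k$ strictly inside $H' = \mathrm{conv}\{\bar P'_i : v_i \in V(f^*)\}$. Therefore the convex hull of $P'$ is exactly $H'$ with the desired cyclic order.

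The main obstacle will be (ii). A naive approach that inspects the perturbed line through $\bar P'_i$ and $\bar P'_j$ directly can fail, because that line can tilt substantially relative to $L^s_{ij}$ when $\|\bar P_i - \bar P_j\|$ is close to its minimum value of $3r^*$, potentially pushing a far-away interior point onto the wrong side. The support-function approach sidesteps this by testing only against the fixed original normal $\hat n_{ij}$, exploiting the three-fold slack $3r^* = r^* + r^* + r^*$ from $d^*_A$ to separately absorb the perturbations of $\bar P'_i$, $\bar P'_j$, and $\bar P'_k$.
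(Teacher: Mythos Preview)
Your overall plan---establish each $G$-edge as a Delaunay edge of $P'$ via an enlarged witness disc, establish that the convex hull of $P'$ is the $f^*$-cycle, and then finish with the edge-count identity---mirrors the paper's proof exactly, and your argument for (i) is essentially identical to the paper's.

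For (ii) you take a different route from the paper: the paper argues by contradiction using two lines $L_U,L_D$ parallel to $L^s_{ij}$ at distance $\tfrac32 r^*$, whereas you project onto the fixed normal $\hat n_{ij}$. Your computation is correct up through the displayed inequalities, and your separate argument that every $\bar P'_k$ with $v_k\notin V(f^*)$ lies strictly inside $H'=\mathrm{conv}\{\bar P'_\ell:v_\ell\in V(f^*)\}$ is also correct. However, the step ``both $\bar P'_i$ and $\bar P'_j$ strictly dominate every other point in direction $\hat n_{ij}$, which shows that $\bar P'_i$ and $\bar P'_j$ are adjacent vertices of the convex hull of $P'$'' is not justified. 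Being the two strict maximizers in a single direction does \emph{not} force both points to be hull vertices: the lower of the two can lie in the convex hull of the higher one together with far-away points sitting just below the $M-2r^*$ level. Concretely, with $r^*=1$ take $\bar P_i=(0,0)$, $\bar P_j=(3,0)$, $\bar P_k=(100,-3)$, $\bar P_\ell=(-100,-3)$ (so $d^*_N\ge 3r^*$ and the $\hat n_{ij}$-separation is exactly $3r^*$), and perturb to $\bar P'_j=(3,-1)$, $\bar P'_k=(100,-2)$, $\bar P'_\ell=(-100,-2)$; then $\bar P'_i,\bar P'_j$ are the top two in direction $(0,1)$ with the required $r^*$ margin, yet $\bar P'_j$ lies strictly inside the triangle $\bar P'_i\bar P'_k\bar P'_\ell$ and is not a hull vertex at all. (This configuration violates $d^*_A\ge 3r^*$ for the \emph{other} hull edges, so it is not a counterexample to the lemma itself---but your argument for the edge $(v_i,v_j)$ invokes only the $\hat n_{ij}$-bound and therefore cannot exclude it.) To close the gap you need additional input: either bring in the two $C^*$-edges incident to each $v_\ell\in V(f^*)$ simultaneously to pin down $\bar P'_\ell$ as an extreme point, or control the actual line through $\bar P'_i$ and $\bar P'_j$ as the paper attempts.
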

\begin{proof}
Let $\bar P_1,\bar P_2, \cdots, \bar P_{n_c}$ be the convex hull of $P$, where $n_c=|V(f^*)|$. We start by proving that $\bar P_1,\bar P_2, \cdots, \bar P_{n_c}$ forms the convex hull of $P'$. Towards this, we rely on the following property of convex hull. For a point set $Q$, an edge $(q,q')$ of the convex hull of $Q$, and a point $p \in Q \setminus \{q,q'\}$ let $H$ be the hyperplane containing $p$, which is defined by the line between $q$ and $q'$. Then for all $p' \in Q \setminus \{q,q'\}$, $p'$ lies in the hyperplane $H$. In fact, its converse holds as well, i.e., if for a pair of points all the other points are contained on the same hyperplane defined by them, then it forms an edge of the convex hull.

Consider an edge $(\bar P_i, \bar P_{j})$ of the convex hull of $P$, and the corresponding pair of points $(\bar P'_i, \bar P'_{j})$ in $P'$. We will show that $(\bar P'_i, \bar P'_{j})$ is an edge in the convex hull of $P'$. Assuming the contrary, suppose that $(\bar P'_i, \bar P'_{j})$ is not an edge of the convex hull of $P'$. This implies that there exist $\bar P'_k, \bar P'_s \in P'$ such that $\bar P'_k$ and $\bar P'_s$ lie on different sides (hyperplanes) defined by the (unique) line containing $\bar P'_i, \bar P'_{j}$. We will show that $\bar P_k$ and $\bar P_s$ lie on the opposite sides of the hyperplane defined by $\bar P_i$ and $\bar P_j$, thus arriving at a contradiction that $(\bar P_i, \bar P_{j})$ is an edge of the convex hull of $P$. Let $L$ be the line containing $\bar P_i, \bar P_j$. Consider the lines $L_U$ and $L_D$ which are at a distance $3r^*/2$ from $L$, and are parallel to $L$, but are on different sides of $L$. We note that $\bar P_k, \bar P_i, \bar P_j$ and $\bar P_s, \bar P_i, \bar P_j$ are not collinear, respectively. This follows from the fact that $\displaystyle{d^*_A} \geq 3r^*>0$. Note that $\bar P'_i$ and $\bar P'_j$ lie in the region between the lines $L_U$ and $L_D$ (and not on $L_U$ or $L_D$). This follows from the fact that $\textsf{dist}(\bar P_t, \bar P'_t) \leq r^*$, for all $t \in [n]$. Since $\displaystyle{d^*_A}\geq 3r^*$, the points $\bar P_k$ and $\bar P_s$ do not lie on $L_U$ or $L_D$. Also, they do not lie on the region between $L_U$ and $L_D$. This together with the fact that $\textsf{dist}(\bar P_k, \bar P'_k) \leq r^*$ implies that $\bar P'_k$ does not lie on the lines $L_U$ or $L_D$, or in the region between them. The symmetric argument holds for $\bar P'_s$. Without loss of generality, assume that $\bar P'_k$ lie on the same side of the line $L$ as line $L_D$. By our assumption that $\bar P'_k$ and $\bar P'_s$ lie on different sides (hyperplanes) defined by the line containing $\bar P'_i, \bar P'_{j}$, we deduce $\bar P'_s$ must lie on the side of $L$ containing the line $L_U$. Furthermore, we assume that both $\bar P_k, \bar P_s$ lie on the same side of $L$ as the line $L_D$ (the other case is symmetric). But this implies that $\textsf{dist}(\bar P_s, \bar P'_s) \geq 2 r^*$, a contradiction. Therefore, we have that $(\bar P'_i, \bar P'_j)$ is an edge of the convex hull of $P'$. Note that this implies that $\bar P_1,\bar P_2, \cdots, \bar P_{n_c}$ forms the convex hull of $P'$. 

Next, we elucidate the edge relationships between $\mathscr{DT}(P)$ and $\mathscr{DT}(P')$. To this end, consider an edge $(\bar P_i, \bar P_j) \in E(\mathscr{DT}(P))$, where $i,j \in [n]$. Moreover, consider the disc $C^*_{ij}$ co-centric to $C_{ij}$ with radius $r^*_{ij}=r_{ij}+r^*$, where $r_{ij}$ is the radius of $C_{ij}$ (see Fig.~\ref{fig1}). Our objective is to prove that $C^*_{ij}$ contains both $\bar P'_i$ and $\bar P'_j$, and excludes all other points in $P'$. This would imply that $(\bar P'_i,\bar P'_j) \in \mathscr{DT}(P')$ as desired. First, let us argue that $C^*_{ij}$ contains both $\bar P'_i$ and $\bar P'_j$. For this purpose, consider the triangle formed by the points $\bar P_{ij}, \bar P_i$ and $\bar P'_i$. From the triangle inequality it follows that $\textsf{dist}(\bar P'_i, \bar P_{ij}) \leq \textsf{dist}(\bar P_i,\bar P_{ij}) + \textsf{dist}(\bar P'_i,\bar P_i)$. However, $\textsf{dist}(\bar P'_i,\bar P_i) \leq r^*$, and therefore $\textsf{dist}(\bar P'_i, \bar P_{ij}) \leq r_{ij}+r^*$. Hence, it follows that $C^*_{ij}$ contains the point $\bar P'_{i}$. Symmetrically, we have that $C^*_{ij}$ contains the point $\bar P'_{j}$.

We now argue that for $k \in [n] \setminus \{i,j\}$, the point $\bar P'_k$ lies outside the disc $C^*_{ij}$. Note that $\textsf{dist}(C_{ij}, \bar P_k) \geq 3r^*$ (since $d^*_C \geq 3r^*$). In particular, $\textsf{dist}(\bar P_{ij}, \bar P_k) \geq r_{ij}+3r^*$ (see Fig.~\ref{fig2}). This, together with the triangle inequality in the context of the triangle formed by the points $\bar P_{ij}, \bar P_k$ and $\bar P'_k$, implies that $\textsf{dist}(\bar P_{ij}, \bar P'_k) \geq r_{ij}+ 2r^*$. Hence, $\bar P'_k$ indeed lies outside the disc $C^*_{ij}$. Overall, we conclude that $(\bar P'_i, \bar P'_j) \in E(\mathscr{DT}(P'))$. 

Note that both $\mathscr{DT}(P)$ and $\mathscr{DT}(P')$ are on $n$ points, and both have $|V(f^*)|$ vertices on the outer face. Further, we have shown that $|\mathscr{DT}(P)| \leq |\mathscr{DT}(P')|$. Then, from Theorem 9.1 in~\cite{comp-geom-berg} it follows that $|E(\mathscr{DT}(P'))| =  |E(\mathscr{DT}(P))|$. In other words, non-edges are also preserved. This concludes the proof. 
  \end{proof}

\begin{theorem}\label{lem:exists-circle}
Let $G$ be a \triangulation\ on $n$ vertices with $f^*$ as its outer face, realizable as a \DT\ where the points corresponding to vertices of $f^*$ lie on the outer face. Moreover, let ${\cal Q}$ be a solution of $\textsf{Const}(G)$ and $\beta \in \mathbb{R}^+$. Then, there is a solution ${\cal P}$ of $\textsf{Const}(G)$, assigning a set of points $P \subseteq \mathbb{R}^2$ to vertices of $G$, such that for each $v_i \in V(G)$, there exists a disc $C_i$ with centre $\bar P_i$ and radius at least $\beta$ for which the following condition holds. For any $P'=\{\bar P'_i \mid \bar P'_i \in C_i, i \in [n]\}$, it holds that $\mathscr{DT}(P')$ is isomorphic to $\mathscr{DT}(P)$, and the points corresponding to vertices of $f^*$ lie on the outer face of $\mathscr{DT}(P')$.
\end{theorem}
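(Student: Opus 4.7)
The plan is to exhibit $\cal P$ by rescaling the given solution $\cal Q$ until its associated ``safety radius'' exceeds $\beta$, and then to invoke Lemma~\ref{lem:edge-circle} essentially as a black box. First I would extract from $\cal Q$ three strictly positive quantities: $d_N$, the minimum distance between any two distinct points $\bar Q_i, \bar Q_j$; $d_C$, the minimum distance between a point $\bar Q_k$ and a disc $C'_{ij}$ realizing an edge $(v_i,v_j)$ with $k \notin \{i,j\}$; and $d_A$, the minimum distance from any $\bar Q_k$ to the line through $\bar Q_i, \bar Q_j$ for an edge of the outer cycle $C^*$ with $k \notin \{i,j\}$. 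Positivity of $d_N$ and $d_C$ follows from the strict inequalities in $\textsf{Dis}(G)$, which force $\bar Q_k$ to lie strictly outside $C'_{ij}$; positivity of $d_A$ follows from the strict inequalities of the form $\textsf{Con}(P_i,P_k,P_j)<0$ in $\textsf{Con}(G)$. I would then set $r=\tfrac{1}{3}\min\{d_N,d_C,d_A\}>0$.

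Next I would scale $\cal Q$ by a suitable $\alpha>0$ so that the rescaled solution $\cal P=\alpha\cal Q$ admits safety radius $r^*=\alpha r\geq\beta$. The three cases in the construction preceding the theorem (taking $\alpha=1$ when $r\geq\beta$, $\alpha=\beta$ when $1\leq r<\beta$, and $\alpha=\beta/r$ when $r<1$ and $r<\beta$) cover all situations and in every case yield $r^*\geq\beta$. Observation~\ref{obs:multiply-by-const-soln} guarantees that $\cal P$ still satisfies $\textsf{Const}(G)$, and Observation~\ref{obs:scaling-distance} guarantees that the rescaled quantities $d^*_N, d^*_C, d^*_A$ defined analogously from $\cal P$ satisfy $d^*_N, d^*_C, d^*_A \geq 3r^*$.

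Finally, I would define $C_i$ to be the closed disc of radius $r^*$ centered at $\bar P_i$ for each $v_i \in V(G)$. The inequalities $d^*_N, d^*_C, d^*_A \geq 3r^*$ are precisely the hypotheses used inside the proof of Lemma~\ref{lem:edge-circle}, and so for any choice $P'=\{\bar P'_i \mid \bar P'_i \in C_i,\ i\in[n]\}$, that lemma delivers both that $\mathscr{DT}(P')$ is isomorphic to $\mathscr{DT}(P)$ and that the points corresponding to $V(f^*)$ form the outer face of $\mathscr{DT}(P')$. Combined with $r^*\geq\beta$, this is exactly the theorem.

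The main obstacle, were one approaching this from scratch, is verifying that the safety radius $r$ really can be made strictly positive from $\cal Q$ alone---that is, that the constraints in $\textsf{Const}(G)$, being strict inequalities exactly where needed, genuinely preclude degenerate configurations in which a point lies on a disc boundary, on the convex-hull line, or coincident with another point. Once this positivity is pinned down, the rest of the argument is the uniform-rescaling step together with a direct appeal to Lemma~\ref{lem:edge-circle}.
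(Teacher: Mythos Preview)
Your proposal is correct and follows essentially the same approach as the paper: the paper's proof simply cites the construction of $r^*$ (via $d_N,d_C,d_A$ and the three scaling cases) together with the discs $C_i$ and Lemma~\ref{lem:edge-circle}, exactly as you outline. Your justification of the positivity of $d_N,d_C,d_A$ from the strict inequalities in $\textsf{Con}(G)$ and $\textsf{Dis}(G)$ matches the paper's reasoning as well.
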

\begin{proof}
The proof of theorem follows directly from the construction of $r^*$, the discs $C_i$ for $i\in [n]$, and Lemma~\ref{lem:edge-circle}.
  \end{proof}

\section{\delaunaytr: Integer Coordinates} \label{sec:int-coordinates}

In this section, we prove our main theorem:

\begin{theorem}\label{thm:main}
Given a triangulation $G$ on $n$ vertices, in time $n^{\OO(n)}$ we can either output a point set $P \subseteq \mathbb{Z}^2$ such that $G$ is isomorphic to $\mathscr{DT}(P)$, or correctly conclude that $G$ is not Delaunay realizable.
\end{theorem}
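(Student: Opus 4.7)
The plan is to combine Theorem~\ref{thm:output-polynmials-dt}, Theorem~\ref{lem:exists-circle}, and Proposition~\ref{prop:solving-equations}. Since the outer face of a realizing \DT\ is not specified by the combinatorial input $G$, I would first enumerate over each candidate face $f^*$ of $G$ to serve as the outer face; by Euler's formula there are $O(n)$ such faces, contributing only polynomial overhead. For each candidate $f^*$, I construct the polynomial system $\textsf{Const}(G)$ via Theorem~\ref{thm:output-polynmials-dt}.

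Next, I augment $\textsf{Const}(G)$ with additional integer-coefficient, constant-degree polynomial constraints encoding the three separation inequalities of Section~\ref{sec:exists-circle-dt} at the fixed parameter $\beta := 2$: pairwise vertex distances at least $3\beta$, distance at least $3\beta$ from every disc $C_{ij}$ to each non-incident vertex, and distance at least $3\beta$ from each outer-hull line to each non-incident vertex. The disc-to-vertex condition is expressed by introducing one auxiliary radius variable $R_{ij}$ per edge satisfying $R_{ij}^2 = (X_i - X_{ij})^2 + (Y_i - Y_{ij})^2$ and $R_{ij} \geq 0$, after which the separation inequality becomes $(X_k - X_{ij})^2 + (Y_k - Y_{ij})^2 \geq (R_{ij} + 3\beta)^2$. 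Let $\textsf{Const}^{+}(G)$ denote the augmented system, which has $O(n)$ variables, $O(n^2)$ constraints, and constant degree. By Theorem~\ref{lem:exists-circle} applied with $\beta = 2$, if $\textsf{Const}(G)$ is satisfiable then so is $\textsf{Const}^{+}(G)$ (the rescaled solution $\mathcal{P}$ from Section~\ref{sec:exists-circle-dt} satisfies precisely these added inequalities), and by Lemma~\ref{lem:edge-circle} any solution of $\textsf{Const}^{+}(G)$ exhibits discs $C_i$ of radius $\beta = 2$ around each $\bar P_i$ such that choosing any point in each $C_i$ yields an isomorphic \DT. Applying Proposition~\ref{prop:solving-equations} to $\textsf{Const}^{+}(G)$, in time $n^{O(n)}$ I decide satisfiability and, when satisfiable, obtain a univariate-per-variable system $\widehat{\mathfrak{C}}^{+}$.

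To extract an integer realization, I perform standard real-root isolation on each univariate polynomial of $\widehat{\mathfrak{C}}^{+}$, obtaining rational approximations $\tilde X_i, \tilde Y_i$ of the coordinates of a fixed real solution $\bar P_i$ accurate to within $\tfrac{1}{2}$ per coordinate; since these polynomials have bitsize at most $n^{O(n)}$, this step runs in $n^{O(n)}$ time. Then $\|\tilde P_i - \bar P_i\| \leq \tfrac{\sqrt{2}}{2}$. I round each $\tilde P_i$ coordinatewise to the nearest integer $\bar P'_i \in \mathbb{Z}^2$, introducing at most another $\tfrac{\sqrt{2}}{2}$ of distance, yielding $\|\bar P'_i - \bar P_i\| \leq \sqrt{2} < 2 = \beta$. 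Hence $\bar P'_i \in C_i$, and by Theorem~\ref{lem:exists-circle} we have $\mathscr{DT}(\{\bar P'_i\}) \cong G$. Output $P' := \{\bar P'_i \mid i \in [n]\}$. If no candidate $f^*$ makes $\textsf{Const}^{+}(G)$ satisfiable, output \no.

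The main obstacle will be the careful translation of the three separation quantities of Section~\ref{sec:exists-circle-dt} into polynomial constraints that are simultaneously of constant degree, integer-coefficient, and use only $O(n)$ variables (so Proposition~\ref{prop:solving-equations} runs in $n^{O(n)}$ rather than $n^{O(n^2)}$ time). The line-to-point distance condition is particularly delicate: it naturally produces a degree-$4$ constraint, and reducing it to degree $2$ without blowing the variable count to $O(n^2)$ requires either invoking a constant-degree extension of Proposition~\ref{prop:solving-equations} or judicious reuse of auxiliary variables across triples sharing an outer edge. A secondary concern is the consistent root-selection step for $\widehat{\mathfrak{C}}^{+}$, which is handled by refining isolating intervals to width $\leq 1$ and back-substituting candidates, all within the $n^{O(n)}$ budget.
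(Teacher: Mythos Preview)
Your approach is correct and reaches the same $n^{\OO(n)}$ bound, but it differs from the paper's route in two notable ways. First, instead of encoding robustness by replicating every $\textsf{Con}$ and $\textsf{Dis}$ constraint at nine sample points per vertex (the system $\textsf{ConstSqu}(G)$ of Section~\ref{sec:int-coordinates}), you encode the three separation quantities $d_N,d_C,d_A\geq 3\beta$ of Section~\ref{sec:exists-circle-dt} directly as additional polynomial inequalities. Second, the paper first extracts a \emph{rational} realization (proving in Lemma~\ref{lem:approx-sol} that the $1/2$-approximation is already an exact solution of \delaunaytrRat) and only then multiplies through by the product of denominators to reach $\mathbb{Z}^2$; you bypass this intermediate step and round the approximate real solution straight to the integer grid, invoking Lemma~\ref{lem:edge-circle} to certify the result. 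Your route is arguably more direct and does not need an analogue of Lemma~\ref{lem:approx-sol}, while the paper's route keeps all constraints at degree~$2$ with no auxiliary variables and no appeal to the general-degree form of Renegar's theorem.

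On your stated obstacle: the line--to--point separation constraint need not be degree~$4$, nor does it require $O(n^2)$ auxiliary variables. For each outer edge $(v_i,v_j)$ introduce a single length variable $D_{ij}\geq 0$ with $D_{ij}^2=(X_i-X_j)^2+(Y_i-Y_j)^2$; since $\textsf{Con}(P_i,P_k,P_j)<0$ is already enforced in $\textsf{Con}(G)$, the requirement $\textsf{dist}(L^s_{ij},\bar P_k)\geq 3\beta$ is equivalent to $-\textsf{Con}(P_i,P_k,P_j)\geq 3\beta\,D_{ij}$, which is a degree-$2$ inequality. Thus $\textsf{Const}^+(G)$ can be kept at degree~$2$ with only $O(n)$ added variables ($R_{ij}$ per edge and $D_{ij}$ per outer edge), and Proposition~\ref{prop:solving-equations} applies verbatim.
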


\subparagraph{The Outer Face of the Output.} First, we explain how to identify the outer face $f^*$ of the output (in case the output should not be \no). For this purpose, let $f_{\textsf{out}}$ denote the outer face of $G$ (according to the embedding of the triangulation $G$, given as the input). Recall our assumption that $n \geq 4$. Let us first consider the case where $G$ is not a maximal planar graph, i.e., $f_{\textsf{out}}$ consists of at least four vertices. Suppose that the output is not \no. Then, for any point set $P \subseteq \mathbb{R}^2$ that realizes $G$ as a \DT, it holds that the points corresponding to the vertices of $f_{\textsf{out}}$ form the outer face of $\mathscr{DT}(P)$. Thus, in this case, we simply set $f^*=f_{\textsf{out}}$. Next, consider the case where $G$ is a maximal planar graph. Again, suppose that the output is not \no. Then, for a point set $P \subseteq \mathbb{R}^2$ that realizes $G$ as a \DT, the outer face of $\mathscr{DT}(P)$ need not be the same as $f_{\textsf{out}}$. To handle this case, we ``guess'' the outer face of the output (if it is not \no). More precisely, we examine each face $f$ of $G$ separately, and attempt to solve the ``integral version'' of \connCDT{} with $f^*$ set to $f$, and where $G$ is embedded with $f^*$, rather than $f_{\textsf{out}}$, as its outer face. Here, note that a maximal planar graph is $3$-connected~\cite{Whitney1992}, and therefore, by Proposition~\ref{convex-drawing-plane-graph}, we can indeed compute an embedding of $G$ with $f^*$ as the outer~face.

The number of iterations is  bounded by $\OO(n)$ (since the number of faces of $G$ is bounded by $\OO(n)$). Thus, from now on, we may assume that  we seek only  Delaunay realizations of $G$ where the outer face is the same as the outer face of $G$ (that we denote by $f^*$). 

%\medskip
%\noindent
\subparagraph{Sieving \no-Instances.} We compute the set $\textsf{Const}(G)$ as described in Section~\ref{sec:3-conn-DR}. From Theorem~\ref{thm:output-polynmials-dt}, we know that $G$ is realizable as a \DT\ with the points corresponding to $f^*$ on the outer face if and only if $\textsf{Const}(G)$ is satisfiable. Using Proposition~\ref{prop:solving-equations}, we check whether $\textsf{Const}(G)$ is satisfiable, and if the answer is negative, then we return \no.
Thus, we next focus on the following problem.

\defproblemout{\delaunaytrIntFull{} (\delaunaytrInt)}{A \triangulation\ $G$ with outer face $f^*$ that is realizable as a \DT\ with outer face $f^*$.}{A point set $P \subseteq \mathbb{Z}^2$ realizing $G$ as a \DT{} with outer face~$f^*$.} 

Similarly, we define the intermediate \delaunaytrRatFull{} (\delaunaytrRat) problem---here, however, $P \subseteq \mathbb{Q}^2$ rather than $\mathbb{Z}^2$. To prove Theorem~\ref{thm:main}, it is sufficient to prove the following result, which is the objective of the rest of this paper.

\begin{lemma}\label{thm:mainLemma}
\delaunaytrInt{} is solvable in time $n^{\OO(n)}$.
\end{lemma}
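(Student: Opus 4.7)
The overall plan is to compute a satisfiable real solution of $\textsf{Const}(G)$ in the univariate form given by Proposition~\ref{prop:solving-equations}, approximate it by rationals, and then invoke Theorem~\ref{lem:exists-circle} to scale the approximate solution so that rounding its coordinates to the nearest integer yields a valid integer realization. By the hypothesis of \delaunaytrInt{} and Theorem~\ref{thm:output-polynmials-dt}, $\textsf{Const}(G)$ is satisfiable; by Observation~\ref{obs:ConstProperties} it consists of $\OO(n^2)$ degree-$2$ constraints in $\OO(n)$ variables with coefficients of constant bitsize, so Proposition~\ref{prop:solving-equations} applies and produces, in time $n^{\OO(n)}$, a set $\widehat{\mathfrak{C}}$ of univariate equalities $p_i(X_i)=0$ with integer coefficients and with bitsizes and degrees bounded by $n^{\OO(n)}$.

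The first concrete step is to isolate the real roots of each $p_i$. Using classical root isolation via Sturm sequences, in time $n^{\OO(n)}$ I would compute rational intervals of controllable width $\epsilon$ around each real root. Enumerating the $n^{\OO(n)}$ combinations of one isolated root per $p_i$ and plugging the midpoints into $\textsf{Const}(G)$, I would identify a combination corresponding to an exact real solution $\mathcal{Q}$ of $\textsf{Const}(G)$: by Proposition~\ref{prop:solving-equations} at least one such combination exists, and because the inequalities in $\textsf{Const}(G)$ are all strict, their satisfaction by $\mathcal{Q}$ is preserved under sufficiently small perturbations, so the rational midpoint vector $\mathcal{Q}'$ still (approximately) witnesses the correct choice of roots. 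Mimicking the construction preceding Theorem~\ref{lem:exists-circle}, I would next compute from $\mathcal{Q}'$ approximate values $\tilde d_N, \tilde d_C, \tilde d_A$ and hence a scaling factor $\alpha$ such that the exact scaled solution $\mathcal{P}=\alpha\mathcal{Q}$ admits a tolerance radius $r^*\ge\sqrt{2}$ around every $\bar P_i$. Finally, I would round the coordinates of $\alpha\mathcal{Q}'$ to the nearest integer to obtain a point set $P\subseteq\mathbb{Z}^2$; provided $\epsilon$ is chosen small enough (polynomially small in the bitsizes of the $p_i$'s), each rounded integer point lies within distance $r^*$ of the corresponding $\bar P_i$. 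Applying Lemma~\ref{lem:edge-circle} to $\mathcal{P}$, the discs $C_i$ and $P$ then yields $\mathscr{DT}(P)\cong G$ with $f^*$ as the outer face.

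The main obstacle is making the approximation-to-rounding pipeline watertight, since the tolerance radius $r^*$ and the scaling factor $\alpha$ are defined in terms of the unknown exact $\mathcal{Q}$. I would handle this with an interleaved strategy: start with an initial precision $\epsilon_0$, compute $\mathcal{Q}'$ and the estimates $\tilde d_N,\tilde d_C,\tilde d_A$ (which are $\OO(\epsilon)$-close to $d_N, d_C, d_A$ since these quantities are Lipschitz in the coordinates), derive a tentative $\alpha$ and $r^*$, and tighten $\epsilon$ until the tolerance provably exceeds twice the rounding error introduced by integer rounding at scale $\alpha$. As a final safety net, after producing $P$ I would construct $\mathscr{DT}(P)$ in polynomial time and check for isomorphism with $G$; if the check fails, I would tighten $\epsilon$ further or try the next root combination. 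Each substep---computing $\textsf{Const}(G)$, invoking Proposition~\ref{prop:solving-equations}, root isolation, enumeration of root combinations, scaling, rounding, and the final verification---can be executed in $n^{\OO(n)}$ time, matching the claimed bound.
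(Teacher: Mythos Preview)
Your approach is workable in outline but differs substantially from the paper's, and it leaves a circular dependency that you resolve only heuristically. The paper does not attack $\textsf{Const}(G)$ directly at this stage. Instead it introduces a strengthened system $\textsf{ConstSqu}(G)$ in which every occurrence of $P_i$ is replaced by nine copies (the centre and eight boundary points of an axis-parallel square of side~$2$ around $P_i$), and the disc/convex-hull constraints are imposed on all nine simultaneously. Theorem~\ref{lem:exists-circle} with $\beta=2$, together with Observation~\ref{obs:circle-int-coordinate}, shows that $\textsf{ConstSqu}(G)$ is satisfiable whenever $\textsf{Const}(G)$ is. The payoff is Lemma~\ref{lem:approx-sol}: \emph{any} $1/2$-accurate rational approximation to a solution of $\textsf{ConstSqu}(G)$, restricted to the vertex coordinates, is already an exact Delaunay realization of $G$. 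The required precision is therefore the fixed constant $1/2$, independent of the particular solution, so a single call to the approximation procedure of Arora et al.\ suffices (Lemma~\ref{thm:main-thm-dt}); the resulting rational point set is then cleared of denominators via Observation~\ref{obs:multiply-by-const-soln}.

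By contrast, you work with $\textsf{Const}(G)$ and must estimate the tolerance $r=\tfrac{1}{3}\min\{d_N,d_C,d_A\}$ from an approximation $\mathcal{Q}'$ of the very solution $\mathcal{Q}$ that defines $r$. To certify that your interleaved refinement terminates within $n^{\OO(n)}$ time you need an a~priori lower bound of the form $r\ge 2^{-n^{\OO(n)}}$; this does follow from standard root-separation bounds for the algebraic sample point produced by Proposition~\ref{prop:solving-equations}, but you do not state or invoke such a bound, and without it the ``tighten $\epsilon$ until it works'' loop has no running-time guarantee. A second issue is the enumeration of ``$n^{\OO(n)}$ combinations of one isolated root per $p_i$'': if each univariate $p_i$ can have degree $n^{\OO(n)}$, the naive product of root counts is $n^{\OO(n^2)}$, not $n^{\OO(n)}$, so you would need a primitive-element representation (or simply cite the same Arora et al.\ routine the paper uses) rather than a Cartesian enumeration. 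The paper's $\textsf{ConstSqu}(G)$ device sidesteps both difficulties by hard-wiring the tolerance into the constraint system, so that a fixed precision of $1/2$ is provably enough.
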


In what follows, we crucially rely on the fact that by Theorem~\ref{lem:exists-circle}, for all $\beta \in \mathbb{R}^+$, there is a solution ${\cal P}$ of $\textsf{Const}(G)$ that assigns a set of points $P \subseteq \mathbb{R}^2$ to the vertices of $G$, such that for each $v_i \in V(G)$, there exists a disc $C_i$ with radius at least $\beta$, satisfying the following condition: For any $P'=\{\bar P'_i \mid P'_i \in C_i, i \in [n]\}$, it holds that $\mathscr{DT}(P')$ is isomorphic to $G$ with points corresponding to the vertices in $f^*$ on the outer face (in the same order as in $f^*$).

As it would be cleaner to proceed while working with squares, we need the next observation.

\begin{observation}\label{obs:circle-int-coordinate}
Every disc $C$ with radius at least $2$ contains a square of side length at least $2$ and with the same centre. %point $(x,y) \in \mathbb{Z}^2$ inside or on its boundary.
\end{observation}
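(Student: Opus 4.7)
The plan is to exhibit an explicit axis-aligned square, concentric with $C$, of side length exactly $2$ (a larger one is possible, but $2$ suffices for the statement), and verify it lies inside $C$ by a direct corner-distance calculation.

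First, I would let $r \geq 2$ denote the radius of $C$ and $c$ its centre, and consider the axis-aligned square $S$ centred at $c$ with side length $s = 2$. The key elementary fact I would invoke is that for any square of side length $s$ centred at $c$, the maximum distance from $c$ to any point of the square is attained at one of the four corners, and equals the half-diagonal $s\sqrt{2}/2 = s/\sqrt{2}$. Hence a concentric square of side length $s$ is contained in the disc of radius $r$ around $c$ if and only if $s/\sqrt{2} \leq r$, i.e.\ $s \leq r\sqrt{2}$.

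Substituting, with $s = 2$ and $r \geq 2$, I would check $s/\sqrt{2} = \sqrt{2} \leq 2 \leq r$, so indeed $S \subseteq C$. Since $S$ has side length $2$ and the same centre as $C$, the observation follows. (One could alternatively take the side length to be $r\sqrt{2} \geq 2\sqrt{2}$, the largest concentric square inscribed in $C$, but this is not required by the statement.)

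There is no real obstacle here: the statement is a basic concentric-inscribed-square fact. The only minor point worth mentioning is that the choice of orientation of $S$ is immaterial, because $C$ is rotationally symmetric about $c$ and any concentric square of side length $s$ is a rotation of the axis-aligned one, so the containment $S \subseteq C$ persists under rotation.
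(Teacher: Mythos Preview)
Your proof is correct. The paper states this observation without proof, treating it as self-evident; your explicit corner-distance calculation ($\sqrt{2}\leq 2\leq r$) is exactly the elementary verification one would supply if asked, and matches how the observation is used downstream (axis-aligned squares of side $2$ around each point).
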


We next extend $\textsf{Const}(G)$  to a set $\textsf{ConstSqu}(G)$, which explicitly ensures that there exists a {\em square} around each point in the solution such that the point can be replaced by any point in the square. Thus, rather than discs of radius $2$ (whose existence, in some solution, is proven by choosing $\beta=2$), we consider squares with side length $2$ given by Observation \ref{obs:circle-int-coordinate}, and force our constraints to be satisfied at the corner points of the squares. For this purpose, for each $v_i \in V(G)$, apart from adding constraints for the point $P_i =(X_i,Y_i)$ (which can be regarded as a disc of radius $0$ in the previous setting), we also have constraints for the corner points of the square of side length 2 whose centre is $P_i$. For technical reasons, we also add constraints for the intersection points of perpendicular bisectors. For any constraint where $P_i$ appears, we make copies for the points $P_i=(X_i,Y_i)$, $P^1_i=(X_i-1,Y_i -1)$, $P^2_i=(X_i-1,Y_i +1)$, $P^3_i=(X_i+1,Y_i -1),P^4_i=(X_i+1,Y_i +1), P^5_i=(X_i-1,Y_i), P^6_i=(X_i,Y_i+1), P^7_i=(X_i+1,Y_i), P^8_i=(X_i,Y_i-1)$.\footnote{We remark that we do not create new variables for the corresponding $x$- and $y$-coordinates for points $P^\ell_i$, for $i\in [n]$.} % We are now ready to present the set of polynomials, $\textsf{ConstDisc}(G)$, which we would like to satisfy. % These set of inequalities extends the one in Section~\ref{sec:3-conn-DR} according to the discussion above.%Further, we will prove that a ``good'' approximation to $\textsf{ConstEqDisc}(G)$ is indeed an exact solution to our problem.

%\medskip
%\noindent
%{\bf Inequalities that ensure the outer face forms the convex hull.} 
\subparagraph{Inequalities that ensure the outer face forms the convex hull.} 
We generate the set of constraints that ensure the points corresponding to vertices in $V(f^*)$ form a convex hull of the output point set. Let $C^*$ be the cycle of the outer face $f^*$.  Whenever we say consecutive vertices in $C^*$, we always follow clockwise direction. For three consecutive vertex $v_i,v_j$ and $v_k$ in $C^*$, for every $Z_i \in \{P_i\} \cup \{P^\ell_i \mid \ell \in [8]\}$, $Z_j \in \{P_j\} \cup \{P^\ell_j \mid \ell \in [8]\}$ and $Z_k \in \{P_k\} \cup \{P^\ell_k \mid \ell \in [8]\}$, we add the inequality $\textsf{Con}(Z_i,Z_j,Z_k)>0$. This ensures that the points corresponding to vertices in $V(f^*)$ are in convex position in any output point set. Further, we want all the points which correspond to the vertices in $V(G) \setminus V(f^*)$ to be in the interior of the convex hull formed by the points corresponding to vertices in $V(f^*)$. To achieve this, for each pair of vertices $v_i,v_j$ that are consecutive vertices of $C^*$, $v_k \in V(G) \setminus \{v_i,v_j\}$, $Z_i \in \{P_i\} \cup \{P^\ell_i \mid \ell \in [8]\}$, $Z_j \in \{P_j\} \cup \{P^\ell_j \mid \ell \in [8]\}$ and $Z_k \in \{P_k\} \cup \{P^\ell_k \mid \ell \in [8]\}$, we add $\textsf{Con}(Z_i,Z_k,Z_j)<0$. We call the above set of polynomial constraints $\textsf{ConSqu}(G)$.

%\medskip
%\noindent
\subparagraph{Inequalities that guarantee existence of edges.}
For each edge $(v_i,v_j) \in E(G)$, we add three new variables, namely $X_{ij}, Y_{ij}$ and $r_{ij}$. These newly added variables will correspond to the centre and radius of a disc that realizes the edge $(v_i,v_j)$. There might exist many such discs, but we are interested in only one such disc. In particular, $(X_{ij}, Y_{ij})$ corresponds to centre of one such discs, say $C_{ij}$, with radius $r_{ij}$, containing all the points in $\{P_i\} \cup \{P^\ell_i \mid \ell \in [8]\}$ and $\{ P_j\} \cup \{P^\ell_j \mid \ell \in [8]\}$ but none of the points in $\{P_k \mid k \in [n] \setminus \{i,j\}\} \cup \{P^\ell_k \mid \ell \in [8], k \in [n] \setminus \{i,j\}\}$. Towards this, we add a set of inequalities for each edge $(v_i,v_j) \in E(G)$, which we will denote by $\textsf{DisSqu}(v_i,v_j)$. For each $Z \in \{P_i, P_j\} \cup \{P^\ell_i, P^\ell_j \mid \ell \in [8]\}$, we add the following inequalities to $\textsf{DisSqu}(v_i,v_j)$, ensuring that $C_{ij}$ contains $Z=(Z_X,Z_Y)$.

$$Z^2_X + X^2_{ij}-2Z_XX_{ij}+Z^2_Y + Y^2_{ij} - 2Z_Y Y_{ij}-r^2_{ij} \leq 0.$$

Further, we want to ensure that for each $k \in [n] \setminus \{i,j\}$, $Z \in \{P_k\} \cup \{P^\ell_k \mid \ell \in [8]\}$ does not belong to $C_{ij}$. Hence, for each such $Z = (Z_X,Z_Y)$, the following must hold.

$$Z^2_X + X^2_{ij}-2Z_XX_{ij}+Z^2_Y + Y^2_{ij} - 2Z_Y Y_{ij} - r^2_{ij} > 0.$$

Hence, we add the above constraint to $\textsf{DisSqu}(v_i,v_j)$ for $k \in [n] \setminus \{i,j\}$. We denote $\displaystyle{\textsf{DisSqu}(G)=\bigcup_{(v_i,v_i) \in E(G)}\textsf{DisSqu}(v_i,v_j)}$. 

This completes the description of all the constraints we need. We let $\textsf{ConstSqu}(G)= \textsf{ConSqu}(G) \cup \textsf{DisSqu}(G)$. We let $n^*$ denote the number of variables appearing in $\textsf{ConstSqu}(G)$. Note that $n^*=\OO(n)$ and the number of constraints in $\textsf{ConstSqu}(G)$ is bounded by $\OO(n^2)$.%, and we would seek a solution ${\cal P}$ of $\textsf{ConstSqu}(G)$. 

\begin{theorem}\label{thm:output-polynmials-dt-integers}
Let $G$ be a triangulation on $n$ vertices with $f^*$ as the outer face. Then, in time $\OO(n^2)$ we can find a set of polynomial constraints $\textsf{ConstSqu}(G)$ such that $G$ is realizable as a \DT\ with $f^*$ as its outer face if and only if $\textsf{ConstSqu}(G)$ is satisfiable. Moreover, $\textsf{ConstSqu}(G)$ consists of $\OO(n^2)$ constraints and $\OO(n)$ variables, where each constraint is of degree 2 and with coefficients only from $\{-10,-9,\ldots,10\}$.
\end{theorem}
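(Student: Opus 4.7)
The plan is three-fold: (i) establish the syntactic bounds (essentially by inspection); (ii) prove the forward direction using Theorem~\ref{lem:exists-circle} together with Observation~\ref{obs:circle-int-coordinate}; and (iii) prove the reverse direction, whose crux is upgrading an arbitrary ``wide'' witness disc into a true Delaunay empty-circumcircle. Syntactically, each of the $\OO(n)$ vertices contributes two variables and each of the $\OO(n)$ edges three ($X_{ij},Y_{ij},r_{ij}$), totalling $\OO(n)$; each outer-face triple and each (edge, third vertex) pair gives $9^3$ convex-hull inequalities, and each edge gives $\OO(n)$ disc inequalities, for $\OO(n^2)$ constraints altogether. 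Substituting $(X_i\pm c,Y_i\pm c')$ with $c,c'\in\{-1,0,1\}$ preserves the degree at $2$ and expands coefficients by at most $6$ in magnitude for the cross-product polynomials (and by at most $2$ for the squared-distance polynomials), so all coefficients fit in $\{-10,\ldots,10\}$; the construction runs in $\OO(n^2)$ time.

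For the forward direction, assume $G$ is \DR\ with outer face $f^*$. By Theorem~\ref{thm:output-polynmials-dt} the system $\textsf{Const}(G)$ is satisfiable, and Theorem~\ref{lem:exists-circle} applied with $\beta=2$ produces a solution ${\cal P}$ assigning each $v_i$ a point $\bar P_i$ together with a disc $C_i$ of radius $r^*\geq 2$ around $\bar P_i$, such that any representatives chosen from the $C_i$ preserve the \DT. By Observation~\ref{obs:circle-int-coordinate} the axis-aligned square of side $2$ centered at $\bar P_i$ fits inside $C_i$, so all nine offsets $\bar P_i, \bar P_i^1,\ldots,\bar P_i^8$ lie in $C_i$ (each at distance at most $\sqrt{2}$ from $\bar P_i$). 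The $\textsf{ConSqu}$ constraints then hold because the sign of every $\textsf{Con}$-polynomial is preserved under the representative choices. For each edge $(v_i,v_j)\in E(G)$ I set $(X_{ij},Y_{ij})$ to the center $\bar P_{ij}$ of the original witness disc (of radius $\rho_{ij}$) and $r_{ij}:=\rho_{ij}+\sqrt{2}$. The triangle inequality together with the separation $d_C^*\geq 3r^*\geq 6$ from Section~\ref{sec:exists-circle-dt} then yields inclusion for every $P_i$-/$P_j$-offset (distance $\leq r_{ij}$) and strict exclusion for every $P_k^\ell$-offset with $k\notin\{i,j\}$ (distance $\geq \rho_{ij}+6-\sqrt{2}>r_{ij}$).

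For the reverse direction, let ${\cal P}$ satisfy $\textsf{ConstSqu}(G)$ and set $P=\{\bar P_i\}$. Specialising the $\textsf{ConSqu}$ constraints to the trivial offset recovers $\textsf{Con}(G)$, so the points corresponding to $V(f^*)$ form the convex hull of $P$ in the prescribed cyclic order. For each $(v_i,v_j)\in E(G)$ the assigned disc $C_{ij}$ contains $\bar P_i,\bar P_j$ in its closure and strictly excludes every other $\bar P_k$. Define the open convex sets $A_i=\{c:\|c-\bar P_i\|<\|c-\bar P_k\|\text{ for all }k\notin\{i,j\}\}$ and $A_j$ analogously; the center of $C_{ij}$ lies in $A_i\cap A_j$, so this intersection is non-empty. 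Points on the perpendicular bisector $L$ of $\bar P_i\bar P_j$ are equidistant from $\bar P_i$ and $\bar P_j$, so $A_i\cap L=A_j\cap L$. If this common set were empty, convexity would force $A_i$ into the open half-plane containing $\bar P_i$ and $A_j$ into the opposite open half-plane, contradicting $A_i\cap A_j\neq\emptyset$; hence $A_i\cap L\neq\emptyset$, and any point there is the center of a disc through $\bar P_i,\bar P_j$ on its boundary that strictly excludes every other $\bar P_k$, giving $(\bar P_i,\bar P_j)\in E(\mathscr{DG}(P))$. The edge-count argument of Lemma~\ref{lem:correctness-dr-fwd-3-conn} (via Theorem~9.1 of~\cite{comp-geom-berg}) then yields $E(\mathscr{DT}(P))=E(G)$.

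The main obstacle is the reverse-direction upgrade: the naive move of projecting $C_{ij}$'s center onto $L$ can violate the exclusion constraint with respect to $\bar P_k$ on the far side of $\bar P_i\bar P_j$, so one cannot localise the argument to a single witness center—the convexity of $A_i$ and $A_j$ is what makes global feasibility visible on the bisector. In the forward direction, the secondary subtlety is the choice of $r_{ij}$: too small risks excluding $P_i$-/$P_j$-offsets at the square's corners, too large risks including some $P_k^\ell$-offset, and the slack $6-2\sqrt{2}>0$ guaranteed by the factor $3$ in $d_C^*\geq 3r^*$ is precisely what makes $r_{ij}=\rho_{ij}+\sqrt{2}$ viable.
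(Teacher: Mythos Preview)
Your proof is correct and follows the same route the paper sketches in its one-line proof (``Follows from the construction of $\textsf{ConstSqu}(G)$, Lemma~\ref{lem:correctness-dr-fwd-3-conn}, and Theorem~\ref{lem:exists-circle}''): forward via Theorem~\ref{lem:exists-circle} with $\beta=2$, reverse by specialising to the centre offsets and invoking the edge-count argument of Lemma~\ref{lem:correctness-dr-fwd-3-conn}. Your write-up is in fact more complete than the paper's: the reverse direction needs to upgrade a disc that merely \emph{contains} $\bar P_i,\bar P_j$ (the $\textsf{DisSqu}$ inequality is $\leq$, not $=$) to one with both on its boundary, and your perpendicular-bisector/convexity argument via $A_i,A_j$ supplies exactly this step, which the paper's citation of Lemma~\ref{lem:correctness-dr-fwd-3-conn} leaves implicit.
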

\begin{proof}
Follows from the construction of $\textsf{ConstSqu}(G)$, Lemma~\ref{lem:correctness-dr-fwd-3-conn}, and Theorems~\ref{lem:exists-circle}.
\end{proof}

%We will prove that a ``good'' approximation to $\textsf{ConstEqDisc}(G)$ is indeed an exact solution to our problem.
%We let $n^*$ denote the number of variables appearing in $\textsf{ConstDisc}(G)$. Note that the number of inequalities in $\textsf{ConstDisc}(G)$ is bounded by $\OO(n^2)$.% and we will be looking for ${\cal P} \in \textsf{Zer}(\textsf{ConstDiscEq}(G), \mathbb{R}^{n^*})$. %From the fact that $\textsf{Zer}(\textsf{ConstEq}(G)$, $\mathbb{R}^{n^*})$ is non-empty and Theorem~\ref{lem:exists-circle} we know that $\textsf{Zer}(\textsf{ConstDiscEq}(G), \mathbb{R}^{n^*})$ is non-empty.

Having proved Theorem~\ref{thm:output-polynmials-dt-integers}, we use Proposition~\ref{prop:solving-equations} to decide in time $n^{\OO(n)}$ if $\textsf{ConstSqu}(G)$ is satisfiable. Recall that if the answer is negative, then we returned \no. We compute a ``good'' approximate solution as we describe next. First, by Proposition~\ref{prop:solving-equations}, in time $n^{\OO(n)}$ we compute a (satisfiable) set $\mathfrak{C}$ of $n^*$ polynomial constraints, $\mathscr{C}_1,\mathscr{C}_2,\ldots,\mathscr{C}_{n^*}$, with coefficients in $\mathbb{Z}$, where for all $i\in[n]$, we have that $\mathscr{C}_i$ is an equality constraint on the variable indexed $i$ (only), and a solution of $\mathfrak{C}$ is also a solution of $\textsf{ConstSqu}(G)$.
Next, we would like to find a ``good'' rational approximation to the solution of $\mathfrak{C}$. Later we will prove that such an approximate solution is actually an {\em exact} solution to our problem.

For $\delta >0$, a $\delta$ rational approximate solution $\cal S$ for a set of polynomial equality constraints is an assignment to the variables, for which there exists a solution ${\cal S}^*$, such that for any variable $X$, the (absolute) difference between the assignment to $X$ by $\cal S$ and the assignment to $X$ by ${\cal S}^*$ is at most $\delta$.\footnote{We note that $\cal S$ may not be a solution in the sense that it may not satisfy all constraints (but it is close to some solution that satisfies all of them).} We follow the approach of Arora et al.~\cite{Arora:2012} to find a $\delta$ rational approximation to a solution for a set of polynomial equality constraints with $\delta=1/2$. This approach states that using Renegar's algorithm~\cite{Renegar:1992} together with binary search, with search range bound given by Grigor'ev and Vorobjov~\cite{Grigorev88}, we can find a rational approximation to a solution of a set of polynomial equality constraints with accuracy up to $\delta$ in time $(\tau +n'+m'^{n'} +\log (1/ \delta))^{\OO(1)}$ where $\tau$ is the maximum bitsize of a coefficient, $n'$ is the number of variables and $m'$ is the number of constraints. In this manner, we obtain in time $n^{\OO(n)}$ a rational approximation ${\cal S}$ to the solution of $\mathfrak{C}$ with accuracy $1/2$. By Theorem~\ref{thm:output-polynmials-dt-integers}, ${\cal S}$ is also a rational approximation to a solution of $\textsf{ConstSqu}(G)$ with accuracy $1/2$. We let $\bar P_{S_i} = (\bar X_{S_i}, \bar Y_{S_i})$ denote the value that ${\cal S}$ assigns to $(X_i,Y_i)$ (corresponding to the vertex $v_i \in V(G)$). Further we let $P_S = \{\bar P_{S_i} \mid i \in [n]\}$. In the following lemma, we analyze $\mathscr{DT}(P_S)$.

\begin{lemma}
\label{lem:approx-sol}
The triangulation $G$ is isomorphic to $\mathscr{DT}(P_S)$ where  points corresponding to vertices in $f^*$ form the outer face (in that order). Here, $P_S$ is the point set described above.
\end{lemma}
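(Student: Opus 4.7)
The plan is to compare $\mathcal{S}$ against a nearby exact solution $\mathcal{S}^*$ of $\mathfrak{C}$, which by Proposition~\ref{prop:solving-equations} is also a solution of $\textsf{ConstSqu}(G)$, and to which $\mathcal{S}$ is $\frac{1}{2}$-close coordinate-wise. I write $\bar P^*_i$ for the point assigned to $v_i$ by $\mathcal{S}^*$ and $c^*_{ij}, r^*_{ij}$ for its disc parameters. The accuracy guarantee places $\bar P_{S_i}$ inside the axis-aligned \emph{inner square} of side $1$ centred at $\bar P^*_i$, which is strictly contained in the axis-aligned \emph{outer square} of side $2$ centred at $\bar P^*_i$ whose centre, four corners and four edge-midpoints are exactly the nine reference points $P_i, P_i^1, \ldots, P_i^8$ evaluated at~$\mathcal{S}^*$.

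The convex-hull part will follow from the observation that the orientation polynomial $\textsf{Con}(Z_i,Z_j,Z_k)$ is linear in each of its three vector arguments separately (every monomial is a product of one coordinate of one argument with one coordinate of a different argument). Since a multilinear function on a product of axis-aligned rectangles attains its extrema at the vertices of the product, once $\textsf{ConSqu}(G)$ forces the required strict sign at all $9^3$ reference triples (in particular at the $4^3$ corner triples), the same sign is forced on the entire product of outer squares, and therefore at $(\bar P_{S_i},\bar P_{S_j},\bar P_{S_k})$. This places $V(f^*)$ in convex position in the prescribed cyclic order and leaves every other $\bar P_{S_k}$ strictly inside that convex polygon.

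For the edge part, the central geometric step is to show that for every $(v_i,v_j)\in E(G)$ the exact disc $C^*_{ij}$ contains both $\bar P_{S_i},\bar P_{S_j}$ and strictly excludes every other $\bar P_{S_k}$. Containment is immediate by convexity: $C^*_{ij}$ contains the nine reference points around $\bar P^*_i$, hence their convex hull (the outer square), hence the inner square, hence $\bar P_{S_i}$. For exclusion, I will first note that $C^*_{ij}$ contains the outer squares around both $\bar P^*_i$ and $\bar P^*_j$ and therefore has radius $R:=r^*_{ij} \geq \sqrt{2}$ (the circumradius of a side-$2$ square). With $\hat v$ the unit vector from $c^*_{ij}$ to $\bar P^*_k$, $d:=\|\bar P^*_k - c^*_{ij}\|$ and $s:=|\hat v_x|+|\hat v_y|\in[1,\sqrt{2}]$, applying the exclusion constraint of $\textsf{DisSqu}(v_i,v_j)$ at the corner reference point $e=-(\mathrm{sgn}(\hat v_x),\mathrm{sgn}(\hat v_y))$ yields
\[ d^2-2ds+2>R^2, \]
i.e.\ $d>s+\sqrt{R^2-2+s^2}$. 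The desired bound $d-R>s/2$ then reduces, after squaring, to $R>2/s-3s/4$, whose right-hand side is maximised at $s=1$ with value $5/4$; since $R\geq\sqrt{2}>5/4$ we are done. Because $s/2$ is precisely the maximum of $(x-\bar P^*_k)\cdot(-\hat v)$ over $x$ in the inner square around $\bar P^*_k$, the whole inner square lies strictly outside $C^*_{ij}$, and in particular $\bar P_{S_k}\notin C^*_{ij}$.

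Having shown that $C^*_{ij}$ is a closed disc containing $\bar P_{S_i},\bar P_{S_j}$ and strictly excluding the rest of $P_S$, the same disc-shrinking argument used at the end of the proof of Lemma~\ref{lem:edge-circle} certifies $(\bar P_{S_i},\bar P_{S_j})\in E(\mathscr{DT}(P_S))$. To close, by the convex-hull step $G$ and $\mathscr{DT}(P_S)$ are triangulations on the same $n$-vertex set with the same outer face, so by Theorem~9.1 of~\cite{comp-geom-berg} they have the same number of edges $3n-3-|V(f^*)|$; combined with $E(G)\subseteq E(\mathscr{DT}(P_S))$ from the edge step, this forces $E(G)=E(\mathscr{DT}(P_S))$ and the desired isomorphism. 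The main technical obstacle is the corner-point calculation ruling out $\bar P_{S_k}\in C^*_{ij}$; everything else is structural bookkeeping around the multilinearity of $\textsf{Con}$ and the convexity of the discs $C^*_{ij}$.
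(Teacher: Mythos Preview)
Your proof is correct and reaches the same conclusion as the paper, but the two technical cores are handled by genuinely different arguments.

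For the convex-hull step, the paper argues via separating lines: for each outer edge $(v_i,v_j)$ it observes that all nine reference points of every other vertex lie strictly on one side of a suitable line while those of $v_i,v_j$ lie on the other, and deduces that $(\bar P_{S_i},\bar P_{S_j})$ is a hull edge. You instead exploit that $\textsf{Con}$ is affine in each of its three point arguments, so its sign on a product of axis-aligned squares is determined by the $4^3$ corner triples already constrained by $\textsf{ConSqu}(G)$. Your route is shorter and makes transparent exactly which of the $9^3$ reference constraints are actually needed.

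For the edge-exclusion step, the paper argues by contradiction through the geometry of the intersection $C_{ij}\cap\mathcal{R}_k$: if $\bar P_{S_k}\in C_{ij}$ then the circle must cross one half-side of $\mathcal{R}_k$, and a triangle-inequality chase along the segment from the centre to $\bar P_{S_k}$ forces a nearby reference point into $C_{ij}$. You give a direct quantitative bound: from the corner constraint at $\bar P^*_k+e$ with $e=-(\mathrm{sgn}\,\hat v_x,\mathrm{sgn}\,\hat v_y)$ you extract $d>s+\sqrt{R^2-2+s^2}$, combine it with $R\ge\sqrt{2}$ (coming from the fact that $C^*_{ij}$ swallows a side-$2$ square) to get $d-R>s/2$, and finish by projecting onto $\hat v$. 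This is more explicit than the paper's argument and sidesteps the somewhat informal ``pseudo-disc'' intersection claim there. One small remark: the ``disc-shrinking argument'' you cite is not spelled out in the paper's proof of Lemma~\ref{lem:edge-circle} either---both proofs tacitly use that a closed disc containing two points and strictly excluding the rest certifies a Delaunay edge; you might state this one-line fact directly rather than by reference.
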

\begin{proof}
Note that ${\cal S}$ is a rational approximation to a solution ${\cal P}$ of $\textsf{ConstSqu}(G)$ with accuracy $1/2$. We let $\bar P_{i} = (\bar X_{i}, \bar Y_{i})$ denote the value ${\cal P}$ assigns to $(X_i,Y_i)$ corresponding to the vertex $v_i \in V(G)$, and $P = \{\bar P_{i} \mid i \in [n]\}$. For an edge $(\bar P_i, \bar P_j) \in E(\mathscr{DT}(P))$, let $\bar P_{ij}$, $r_{ij}$ be the centre and radius, respectively, of the disc assigned by $\cal P$, realizing $(\bar P_i, \bar P_j)$. For $i \in [n]$, we denote $\bar P_i=(\bar X_i,\bar Y_i)$, $\bar P^1_i=(\bar X_i-1,\bar Y_i -1)$, $\bar P^2_i=(\bar X_i-1,\bar Y_i +1)$, $\bar P^3_i=(\bar X_i+1,\bar Y_i -1), \bar P^4_i=(\bar X_i+1,\bar Y_i +1), \bar P^5_i=(X_i-1,Y_i),  \bar P^6_i=(X_i,Y_i+1),  \bar P^7_i=(X_i+1,Y_i)$ and $\bar P^8_i=(X_i,Y_i-1)$. For $i \in [n]$, we let $\bar P_{S_i} = (\bar X_{S_i}, \bar Y_{S_i})$ denote the value that ${\cal S}$ assigns for the point $P_i \in {\cal P}$, and $P_S= \{\bar P_{S_i} \mid i \in [n]\}$. We let ${\cal R}_i$ be the axis-parallel square with side bisectors intersecting at ${\bar P}_i$ of side length $2$. Notice that ${\cal R}_i$ contains $\{\bar P^\ell_i \mid \ell \in [8]\}$ on its boundary. Furthermore, $\bar P^1_i, \bar P^2_i, \bar P^3_i$ and $\bar P^4_i$ are its corner points. Since ${\cal S}$ is an $1/2$-accurate approximation to ${\cal P}$, this implies that for each $i \in [n]$, $|\bar X_{S_i} - \bar X_i | \leq 1/2$ and $|\bar Y_{S_i} - \bar Y_i | \leq 1/2$, and hence $\bar P_{S_i}$ is strictly contained inside the square ${\cal R}_i$.  

Consider an edge $(\bar P_i, \bar P_j) \in E(\mathscr{DT}(P))$. Let us recall that from the construction of $\textsf{ConstSqu}(G)$ and since ${\cal P}$ satisfies $\textsf{ConstSqu}(G)$, we have that $C_{ij}$ contains all the points in $\{\bar P_i, \bar P_j\} \cup \{ \bar P^\ell_i, \bar P^\ell_j \mid \ell \in [8]\}$,  and it contains none of the points in $\{\bar P_k \mid k \in [n] \setminus \{i,j\}\} \cup \{\bar P^\ell_k \mid \ell \in [8], k \in [n] \setminus \{i,j\}\}$. By properties of convex sets, this implies that $C_{ij}$ contains the points $\bar P_{S_i}$ and $\bar P_{S_j}$. Now, we prove that $C_{ij}$ does not contain any point $\bar P_{S_k}$ where $k \in [n] \setminus \{i,j\}$. Targeting a contradiction, suppose for some $k \in [n] \setminus \{i,j\}$, $C_{ij}$ contains $\bar P_{S_k}$. Since $\bar P_{S_k}$ is contained strictly inside ${\cal R}_k$ and pseudo discs can intersects in either one point or the intersection contains exactly two points of the boundary, this implies that $C_{ij}$ intersects ${\cal R}_k$ at two points on the boundary. Since $C_{ij}$ contains none of the points in $\{\bar P^\ell_k \mid \ell \in [8]\}$, we have that the two intersecting points on the boundary must lie on exactly one line segment between two consecutive points in $\{\bar P_k^\ell \mid \ell \in [8]\}$ of one of the sides of ${\cal R}_k$---without loss of generality, say they lie on the line segment $\overline{\bar P^1_k\bar P^5_k}$.

Consider the line segment $\bar L_{Ck}$ joining the centre $\bar P_{ij}$ of the disc $C_{ij}$ and $\bar P_{S_k}$, and let $Z$ be the point of its intersection with $\overline{\bar P^1_k\bar P^5_k}$. One of the line segments $\overline{\bar P^1_kZ}$ or $\overline{Z \bar P^5_k}$ has length at most $1/2$, and the other has length at least $1/2$---say $\overline{\bar P^1_kZ}$ has length at most $1/2$ (our arguments also hold for the case where $\overline{Z \bar{P}^5_k}$ has length at most $1/2$). Let $w$ be the length of the line segment $\overline{\bar P_{ij}Z}$. Since the difference between the $x$-coordinates of $Z$ and $\bar{P}_{k}$ is $1$ while the difference between the $x$-coordinates of $\bar P_{S_k}$ and $\bar{P}_{k}$ is at most $1/2$, we have that the length of the line segment $\overline{Z \bar P_{S_k}}$ is at least $1/2$. We deduce that the length of $\bar L_{Ck}$ is at least $w+1/2$.
Thus, the radius of $C_{ij}$ is at least $w+1/2$. By the triangle inequality, it follows that the length of line segment $\overline{\bar P_{ij}\bar P^1_k}$ is at most $w+1/2$. This contradicts the fact that $C_{ij}$ does not contain $\bar P^1_k$. Hence, it follows that $C_{ij}$ does not contain any point in $\{\bar P_{S_k} \mid k \in [n] \setminus \{i,j\}\}$. Therefore, $(\bar P_{S_i}, \bar P_{S_j}) \in E(\mathscr{DT}(P_S))$. 

We now argue that the points corresponding to the vertices in $V(f^*)$ form the convex hull of $P_S$ in the order in which they appear in the cycle bounding $f^*$. Consider two consecutive vertices, $v_i$ and $v_j$, in the cycle of $f^*$. Since ${\cal P}$ satisfies $\textsf{ConSqu}(G)$, we have that for any $Z_i \in {\cal R}_i$ and $Z_j \in {\cal R}_j$, there exists a line $L_{ij}$ such that all the points in $\{P^\ell_k \mid \ell \in [8], k \in [n] \setminus \{i,j\}\}$ lie on one side (on side of the half space) and all the points in $\{P^\ell_i,P^\ell_j \mid \ell \in [8]\}$ lie on the opposite side the line $L_{ij}$. Here, for ensuring that these points do not lie on the line $L_{ij}$ we rely on the definition of $\textsf{ConSqu}(G)$ given in Section~\ref{sec:int-coordinates}, which ensures strict inequality ($<$ or $>$).

%one of the half spaces (sides) it defines contains all the points in $\{P^\ell_k \mid \ell \in [8], k \in [n] \setminus \{i,j\}\}$, and the other half space contains all the points in $\{P^\ell_i,P^\ell_j \mid \ell \in [8]\}$. 

Note that the points $\{\bar P^\ell_i \mid \ell \in [8]\}$ form a convex hull of all the points contained in the square ${\cal R}_i$, and the points $\{\bar P^\ell_j \mid \ell \in [8]\}$ form a convex hull of all the points contained in the square ${\cal R}_j$. But then $L_{ij}$ is a line such that $\bar P_{S_i}$ and $\bar P_{S_j}$ are contained in one of the half spaces defined by $L_{ij}$, and all the points in $\{\bar P_{S_k} \mid k \in [n] \setminus \{i,j\}\}$ are contained in the other half space. Hence it follows that points corresponding to vertices in $(\bar P_{S_i} , \bar P_{S_j})$ is an edge of  the convex hull of $P_S$. This implies that for the convex hull say, $(\bar P_1, \cdots \bar P_{n_c})$ of $P$ we have that $(\bar P_{S_1}, \cdots \bar P_{S_{n_c}})$ is the convex hull of $P_S$. 

Notice that we have shown that $|E(\mathscr{DT}(P))| \leq |E(\mathscr{DT}(P_S))|$, and $(\mathscr{DT}(P))$ and $(\mathscr{DT}(P_S))$ have same number of vertices on the outer face. Hence, by Theorem 9.1~\cite{comp-geom-berg}, it follows that $|E(\mathscr{DT}(P))| = |E(\mathscr{DT}(P_S))|$. This concludes the proof.
\end{proof}
Towards the proof of Lemma~\ref{thm:mainLemma}, we first consider our intermediate problem.

\begin{lemma}\label{thm:main-thm-dt}
\delaunaytrRat{} is solvable in time $n^{\OO(n)}$.
\end{lemma}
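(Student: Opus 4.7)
The plan is to assemble the machinery developed in the preceding sections and verify that the running time is $n^{\OO(n)}$. Given a triangulation $G$ with outer face $f^*$ that is guaranteed to be realizable, I would first invoke Theorem~\ref{thm:output-polynmials-dt-integers} to construct the set of polynomial constraints $\textsf{ConstSqu}(G)$ in time $\OO(n^2)$. This set has $\OO(n^2)$ constraints over $n^* = \OO(n)$ variables, each of degree at most $2$ with coefficients from $\{-10,\ldots,10\}$, and it is satisfiable precisely because $G$ is Delaunay-realizable with $f^*$ on the outer face.

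Next, I would apply Proposition~\ref{prop:solving-equations} to $\textsf{ConstSqu}(G)$ in time $n^{\OO(n)}$ to obtain a satisfiable set $\mathfrak{C}$ of $n^*$ univariate polynomial equality constraints $\mathscr{C}_1,\ldots,\mathscr{C}_{n^*}$ with integer coefficients, where each $\mathscr{C}_i$ depends only on the $i$-th variable and every solution of $\mathfrak{C}$ is also a solution of $\textsf{ConstSqu}(G)$. Then, exactly as in the paragraph preceding Lemma~\ref{lem:approx-sol}, I would combine Renegar's algorithm with binary search and the root-separation bound of Grigor'ev and Vorobjov to compute, in time $n^{\OO(n)}$, a $1/2$-accurate rational approximation ${\cal S}$ to some solution of $\mathfrak{C}$ (and hence of $\textsf{ConstSqu}(G)$). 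Reading off the coordinates assigned by ${\cal S}$ to the vertex variables yields a point set $P_S = \{\bar P_{S_i} \mid i \in [n]\} \subseteq \mathbb{Q}^2$.

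Finally, I would appeal directly to Lemma~\ref{lem:approx-sol}, which guarantees that $\mathscr{DT}(P_S)$ is isomorphic to $G$ with the points corresponding to vertices of $f^*$ forming the outer face in the cyclic order prescribed by $f^*$. Thus $P_S$ is the required rational realization, and the total running time is dominated by the two invocations of $n^{\OO(n)}$-time algebraic subroutines.

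The only genuinely technical step in the chain above is Lemma~\ref{lem:approx-sol}, which has already been established; the present lemma is essentially a ``wrapper'' that collects its consequences. The conceptual reason why a coarse $1/2$-accurate approximation suffices is the built-in padding of $\textsf{ConstSqu}(G)$ by axis-parallel squares of side length $2$: because each true coordinate is the centre of such a square in which every disc and convex-hull constraint is already forced to hold, perturbing each coordinate by at most $1/2$ stays strictly inside these squares and therefore preserves both the convex-hull structure on $V(f^*)$ and every witnessing disc for every edge of $G$. This is precisely what turns an approximation algorithm into an exact one, and it is the key observation that makes the $n^{\OO(n)}$ bound achievable for \delaunaytrRat.
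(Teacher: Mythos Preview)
Your proposal is correct and follows essentially the same approach as the paper: construct $\textsf{ConstSqu}(G)$ via Theorem~\ref{thm:output-polynmials-dt-integers}, compute a $1/2$-accurate rational approximation using the Arora et al.\ pipeline (Proposition~\ref{prop:solving-equations} plus Renegar/Grigor'ev--Vorobjov), and invoke Lemma~\ref{lem:approx-sol} to conclude that the approximate point set is already an exact realization. Your write-up is simply a more detailed unpacking of the paper's three-sentence proof, with a helpful intuitive paragraph at the end.
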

\begin{proof}
Our algorithm first computes the set of polynomial constraints $\textsf{ConstSqu}(G)$ in time $\OO(n^2)$. Then, it computes a $1/2$ accurate approximate solution for $\textsf{ConstSqu}(G)$ by using the approach of Arora et al.~\cite{Arora:2012} in time $n^{\OO(n)}$. In Lemma~\ref{lem:approx-sol}, we have shown that such an approximate solution is an exact solution. This concludes the proof.
\end{proof}
%\begin{proof}
%Our algorithm first computes the set of polynomial constraints $\textsf{ConstSqu}(G)$ in time $\OO(n^2)$. Then, it computes a $1/2$ accurate approximate solution for $\textsf{ConstSqu}(G)$ by using the approach of Arora et al.~\cite{Arora:2012} in time $n^{\OO(n)}$. In Lemma~\ref{lem:approx-sol}, we have shown that such an approximate solution is an exact solution. This concludes the proof.
%Theorem~\ref{thm:integer-theorem-dt} implies that if $G$ is realizable as \DT\ with $f^*$ as the outer face then there exists ${\cal P} \in \textsf{Zer}(\textsf{ConstEq}(G)$, $\mathbb{Z}^{n^*})$ assigning the set of points $P \subseteq \mathbb{Z}^2$ to vertices of $G$ such that $\mathscr{DT}(P)$ is isomorphic to $G$ with points corresponding to vertices in $V(f^*)$ on its outer face. In this case using second part of the Proposition~\ref{prop:solving-equations} and by setting $C=\mathbb{Z}$ in time $2^{\OO(1)}$ we obtain ${\cal P} \in \textsf{Zer}(\textsf{ConstEq}(G))$ assigning a point $\bar P_i$ to the vertex $v_i \in V(G)$ such that $G$ is isomorphic to $\mathscr{DT}(P)$. This concludes the proof. 
%\end{proof}

%Armed with Lemma~\ref{thm:main-thm-dt}, we can prove Lemma~\ref{thm:mainLemma}, and thus conclude the correctness of Theorem~\ref{thm:main}. The proof Lemma~\ref{thm:mainLemma} can be found in the appendix. 

Finally, we are ready to prove Lemma~\ref{thm:mainLemma}, and thus conclude the correctness of Theorem~\ref{thm:main}.

\begin{proof}[Proof of Lemma~\ref{thm:mainLemma}]%{Proof of Lemma~\ref{thm:mainLemma}}
%\subparagraph*{Proof of Lemma~\ref{thm:mainLemma}.}
We use the algorithm given by Lemma~\ref{thm:main-thm-dt} to output a point set $P \subseteq \mathbb{Q}^2$ in time $n^{\OO(n)}$ such that $G$ is isomorphic to $\mathscr{DT}(P)$ and the points corresponding to vertices in $V(f^*)$ lie on the outer face of $\mathscr{DT}(P)$ in the order in which they appear in the cycle of $f^*$. We denote by $\bar P_i= (\bar X_i, \bar Y_i)$ the value $P$ assigns to the vertex $v_i \in V(G)$. For $i \in [n]$, since $\bar X_i, \bar Y_i \in \mathbb{Q}$, we let the representation be $\bar X_i= \bar X^a_i/ \bar X^b_i$ and $\bar Y_i= \bar Y^a_i/ \bar Y^b_i$, where $\bar X^a_i, \bar X^b_i, \bar Y^a_i, \bar Y^b_i \in \mathbb{Z}$. For each edge $(\bar P_i, \bar P_j) \in E(\mathscr{DT}(P))$, there exists a disc $C_{ij}$ with a centre, say $\bar P_{ij}$, containing only $\bar P_i$ and $\bar P_j$ from $P$. These assignments satisfy the constraints $\textsf{Con}(G)$ and $\textsf{Dis}(G)$ presented in Section~\ref{sec:3-conn-DR}. It thus follows that ${\cal P}$ satisfies $\textsf{Const}(G)$. From Observation~\ref{obs:multiply-by-const-soln} it follows that for any $\alpha \in \mathbb{R}^+$, we have that $\alpha {\cal P}$ satisfies $\textsf{Const}(G)$. We let $\beta= \Pi_{i \in [n]} \bar X^b_i\bar Y^b_i$. But then $\beta {\cal P}$ satisfies $\textsf{Const}(G)$, and hence $\beta P =\{(\beta \bar X_i,\beta \bar Y_i \mid i \in [n]\}$ is a point set such that $G$ is isomorphic to $\mathscr{DT}(\beta P)$ where the points corresponding to vertices in $V(f^*)$ lie on the outer face of $\mathscr{DT}(\beta P)$. Therefore, we output a correct point set, $\beta P$, with only integer coordinates. This concludes the proof.
  \end{proof}

\section{Conclusion}
In this paper, we gave an $n^{\OO(n)}$-time algorithm for the \delaunaytr{} problem. We have thus obtained the first exact exponential-time algorithm for this problem. Still, the existence of a practical (faster) exact algorithm for \delaunaytr{} is left for further research. In this context, it is not even clear whether a significantly faster algorithm, say a polynomial-time algorithm, exists. Perhaps one of the first questions to ask in this regard is whether there exist instances of graphs that are realizable but for which the integers in any integral solution need to be exponential in the input size? If yes, does even the representation of these integers need to be exponential in the input size?

\bibliography{references}

\begin{thebibliography}{10}

\bibitem{DBLP:journals/dcg/AdiprasitoPT15}
Karim~A. Adiprasito, Arnau Padrol, and Louis Theran.
\newblock Universality theorems for inscribed polytopes and delaunay
  triangulations.
\newblock {\em Discrete {\&} Computational Geometry}, 54(2):412--431, 2015.

\bibitem{AlamRS11}
Md.~Ashraful Alam, Igor Rivin, and Ileana Streinu.
\newblock Outerplanar graphs and {D}elaunay triangulations.
\newblock In {\em Proceedings of the 23rd Annual Canadian Conference on
  Computational (CCCG)}, 2011.

\bibitem{Arora:2012}
Sanjeev Arora, Rong Ge, Ravindran Kannan, and Ankur Moitra.
\newblock Computing a nonnegative matrix factorization -- provably.
\newblock In {\em Proceedings of the 44th Annual ACM Symposium on Theory of
  Computing}, STOC, pages 145--162, 2012.

\bibitem{Artin:1998}
M.~Artin.
\newblock {\em Algebra}.
\newblock Pearson Prentice Hall, 2011.

\bibitem{book-real-algebraic-geometry-basu}
Saugata Basu, Richard Pollack, and Marie-Fran\c{c}oise Roy.
\newblock {\em Algorithms in Real Algebraic Geometry (Algorithms and
  Computation in Mathematics)}.
\newblock Springer-Verlag New York, Inc., Secaucus, NJ, USA, 2006.

\bibitem{comp-geom-berg}
Mark~de Berg, Otfried Cheong, Marc~van Kreveld, and Mark Overmars.
\newblock {\em Computational Geometry: Algorithms and Applications}.
\newblock Springer-Verlag TELOS, 3rd ed. edition, 2008.

\bibitem{DBLP:journals/acta/ChibaON85}
Norishige Chiba, Kazunori Onoguchi, and Takao Nishizeki.
\newblock Drawing plane graphs nicely.
\newblock {\em Acta Inf.}, 22(2):187--201, 1985.

\bibitem{Clarkson:1989}
K.~L. Clarkson and P.~W. Shor.
\newblock Applications of random sampling in computational geometry, {II}.
\newblock {\em Discrete Computational Geometry}, 4:387--421, 1989.

\bibitem{book-algorithms-cormen}
Thomas~H. Cormen, Charles~E. Leiserson, Ronald~L. Rivest, and Clifford Stein.
\newblock {\em Introduction to Algorithms {(3.} ed.)}.
\newblock {MIT} Press, 2009.

\bibitem{DBLP:conf/stoc/FraysseixPP88}
Hubert de~Fraysseix, J{\'{a}}nos Pach, and Richard Pollack.
\newblock Small sets supporting f{\'{a}}ry embeddings of planar graphs.
\newblock In {\em Proceedings of the 20th Annual {ACM} Symposium on Theory of
  Computing (STOC)}, pages 426--433, 1988.

\bibitem{DBLP:journals/siamdm/BattistaV96}
Giuseppe {Di Battista} and Luca Vismara.
\newblock Angles of planar triangular graphs.
\newblock {\em SIAM Journal on Discrete Mathematics}, 9(3):349--359, 1996.

\bibitem{diestel-book}
Reinhard Diestel.
\newblock {\em Graph Theory, 4th Edition}, volume 173 of {\em Graduate texts in
  mathematics}.
\newblock Springer, 2012.

\bibitem{Dillencourt:1987:TDT}
M.~Dillencourt.
\newblock Toughness and {D}elaunay triangulations.
\newblock In {\em Proceedings of the Third Annual Symposium on Computational
  Geometry}, SoCG, pages 186--194, 1987.

\bibitem{Dillencourt:1990}
Michael.~B. Dillencourt.
\newblock Realizability of {D}elaunay triangulations.
\newblock {\em Information Processing Letters}, 33:283--287, 1990.

\bibitem{Dillencourt-DT}
Michael~B. Dillencourt and Warren~D. Smith.
\newblock Graph-theoretical conditions for inscribability and {D}elaunay
  realizability.
\newblock {\em Discrete Mathematics}, 161(1-3):63--77, 1996.

\bibitem{Grigorev88}
D.~Yu. Grigor'ev and N.~N. Vorobjov, Jr.
\newblock Solving systems of polynomial inequalities in subexponential time.
\newblock {\em Journal of Symbolic Computation}, 5:37--64, 1988.

\bibitem{Guibas1992}
Leonidas~J. Guibas, Donald~E. Knuth, and Micha Sharir.
\newblock Randomized incremental construction of {D}elaunay and {V}oronoi
  diagrams.
\newblock {\em Algorithmica}, 7(1):381--413, 1992.

\bibitem{hiroshima2000}
Tetsuya Hiroshima, Yuichiro Miyamoto, and Kokichi Sugihara.
\newblock Another proof of polynomial-time recognizability of {D}elaunay
  graphs.
\newblock {\em IEICE Transactions on Fundamentals of Electronics,
  Communications and Computer Sciences}, 83:627--638, 2000.

\bibitem{hodgson1992char}
Craig~D Hodgson, Igor Rivin, and Warren~D Smith.
\newblock A characterization of convex hyperbolic polyhedra and of convex
  polyhedra inscribed in the sphere.
\newblock {\em Bulletin of the American Mathematical Society}, 27:246--251,
  1992.

\bibitem{DBLP:journals/jct/KratochvilM94}
Jan Kratochv{\'{\i}}l and Ji{v{r}}{'{i}} Matou{v{s}}ek.
\newblock Intersection graphs of segments.
\newblock {\em J. Comb. Theory, Ser. {B}}, 62(2):289--315, 1994.

\bibitem{lambert1997}
Timothy Lambert.
\newblock An optimal algorithm for realizing a {D}elaunay triangulation.
\newblock {\em {Information Processing Letters}}, 62(5):245--250, 1997.

\bibitem{DBLP:journals/jct/McDiarmidM13}
Colin McDiarmid and Tobias M{\"{u}}ller.
\newblock Integer realizations of disk and segment graphs.
\newblock {\em Journal of Combinatorial Theory, Series B}, 103(1):114--143,
  2013.

\bibitem{DBLP:journals/siamdm/MullerLL13}
Tobias M{\"{u}}ller, Erik~Jan van Leeuwen, and Jan van Leeuwen.
\newblock Integer representations of convex polygon intersection graphs.
\newblock {\em {SIAM} J. Discrete Math.}, 27(1):205--231, 2013.

\bibitem{DBLP:journals/ieicet/NishizekiMR04}
Takao Nishizeki, Kazuyuki Miura, and Md.~Saidur Rahman.
\newblock Algorithms for drawing plane graphs.
\newblock {\em {IEICE} Transactions}, 87-D(2):281--289, 2004.

\bibitem{OISHI}
Yasuaki Oishi and Kokichi Sugihara.
\newblock Topology-oriented divide-and-conquer algorithm for {V}oronoi
  diagrams.
\newblock {\em Graphical Models and Image Processing}, 57:303 -- 314, 1995.

\bibitem{Okabe-comp-geom-book}
Atsuyuki Okabe, Barry Boots, and Kokichi Sugihara.
\newblock {\em Spatial Tessellations: Concepts and Applications of {Voronoi}
  Diagrams}.
\newblock John Wiley \& Sons, Inc., 1992.

\bibitem{opac-b1093656}
János Pach and Pankaj~K. Agarwal.
\newblock {\em {Combinatorial Geometry}}.
\newblock Wiley-Interscience series in discrete mathematics and optimization.
  Wiley, New York, 1995.

\bibitem{Renegar:1992}
James Renegar.
\newblock On the computational complexity and geometry of the first-order
  theory of the reals.
\newblock {\em Journal of Symbolic Computation}, 13:255--352, 1992.

\bibitem{rivin1994euclidean}
Igor Rivin.
\newblock Euclidean structures on simplicial surfaces and hyperbolic volume.
\newblock {\em Annals of Mathematics}, 139:553--580, 1994.

\bibitem{Sugihara:1994}
Kokichi Sugihara.
\newblock Simpler proof of a realizability theorem on {D}elaunay
  triangulations.
\newblock {\em {Information Processing Letters}}, 50:173--176, 1994.

\bibitem{sugihara1992const}
Kokichi Sugihara and Masao Iri.
\newblock Construction of the {V}oronoi diagram for one million generators in
  single-precision arithmetic.
\newblock {\em Proceedings of the IEEE}, 80:1471--1484, 1992.

\bibitem{sugihara1994robust}
Kokichi Sugihara and Masao Iri.
\newblock A robust topology-oriented incremental algorithm for {V}oronoi
  diagrams.
\newblock {\em International Journal of Computational Geometry \&
  Applications}, 4(02):179--228, 1994.

\bibitem{Tutte-draw-graphs}
William~Thomas Tutte.
\newblock How to draw a graph.
\newblock {\em Proceedings of the London Mathematical Society}, 3(1):743--767,
  1963.

\bibitem{Whitney1992}
Hassler Whitney.
\newblock {\em Congruent Graphs and the Connectivity of Graphs}, pages 61--79.
\newblock Birkh{\"a}user Boston, Boston, MA, 1992.

\end{thebibliography}

\end{document}